\definecolor{darkred}{RGB}{150,0,0}
\definecolor{darkgreen}{RGB}{0,150,0}
\definecolor{darkblue}{RGB}{0,0,200}
\newtheorem{theorem}{Theorem}[section]
\newtheorem{lemma}[theorem]{Lemma}
\newcommand\tr{{{\operatorname{tr}}}}
\newcommand{\fronorm}[1]{\left\|#1\right\|_{F}}
\newcommand{\onenorm}[1]{\left\|#1\right\|_{\ell_1}}
\newcommand{\twonorm}[1]{\left\|#1\right\|_{\ell_2}}
\newcommand{\abs}[1]{\left|#1\right|}
\newcommand{\cA}{\mathcal{A}}
\newcommand{\x}{\vct{x}}
\newcommand{\y}{\vct{y}}
\newcommand{\W}{\mtx{W}}
\newcommand{\R}{\mathbb{R}}
\newcommand{\C}{\mathbb{C}}
\newcommand{\<}{\langle}
\renewcommand{\>}{\rangle}
\renewcommand{\P}{\operatorname{\mathbb{P}}}
\newcommand{\E}{\operatorname{\mathbb{E}}}
\newcommand{\vct}[1]{\bm{#1}}
\newcommand{\mtx}[1]{\bm{#1}}
\newcommand{\rank}{\operatorname{rank}}
\definecolor{ejc}{RGB}{0,0,255}
\numberwithin{equation}{section} 
\def \endprf{\hfill {\vrule height6pt width6pt depth0pt}\medskip}
\newenvironment{proof}{\noindent {\bf Proof} }{\endprf\par}
\newcommand{\avg}[1]{\left< #1 \right>}
\title{Phase Retrieval from Coded Diffraction Patterns}
\author{ Emmanuel J. Cand\`{e}s\thanks{Departments of Mathematics and
    of Statistics, Stanford University, Stanford CA} \quad  Xiaodong
  Li\thanks{Department of Statistics, The Wharton School, University
    of Pennsylvania, Philadelphia, PA} \quad Mahdi
  Soltanolkotabi\thanks{Department of Electrical Engineering, Stanford
    University, Stanford CA} }
\begin{document}
\maketitle

\begin{abstract}
  This paper considers the question of recovering the phase of an
  object from intensity-only measurements, a problem which naturally
  appears in X-ray crystallography and related disciplines. We study a
  physically realistic setup where one can modulate the signal of
  interest and then collect the intensity of its diffraction pattern,
  each modulation thereby producing a sort of {coded diffraction
    pattern}. We show that {\em PhaseLift}, a recent convex
  programming technique, recovers the phase information exactly from a
  number of random modulations, which is polylogarithmic in the number
  of unknowns.  Numerical experiments with noiseless and noisy data
  complement our theoretical analysis and illustrate our approach.
\end{abstract}

\section{Introduction}

\subsection{The phase retrieval problem}

In many areas of science and engineering, we only have access to
magnitude measurements; for instance, it is far easier for detectors
to record the modulus of the scattered radiation than to measure its
phase. Imagine then that we have a discrete object $\x \in \C^n$, and
that we would like to measure $\<\vct{a}_k, \x\>$ for some sampling vectors
$\vct{a}_k \in \C^n$ but only have access to phaseless measurements of the
form
\begin{equation}
\label{eq:general}
y_k = |\<\vct{a}_k, \x\>|^2, \quad k = 1, \ldots, m. 
\end{equation}
The phase retrieval problem is that of recovering the missing phase of
the data $\<\vct{a}_k, \x\>$. Once this information is available, one
can find the vector $\x$ by essentially solving a system of linear
equations.

The quintessential phase retrieval problem, or phase problem for
short, asks to recover a signal from the modulus of its Fourier
transform. This comes from the fact that in coherent X-ray imaging, it
follows from the Fraunhofer diffraction equation that the optical
field at the detector is well approximated by the Fourier transform of
the object of interest. Since photographic plates, CCDs and other
light detectors can only measure light intensity, the problem is then
to recover $\x = \{x[t]\}_{t = 0}^{n-1} \in \C^n$ from measurements of
the type
\begin{equation}
  \label{eq:Fourier}
  y_k = \left| \sum_{t = 0}^{n-1} x[t] e^{-i2\pi \omega_k t}
  \right|^2, \quad \omega_k \in \Omega, 
\end{equation}
where $\Omega$ is a sampled set of frequencies in $[0,1]$ (we stated
the problem in one dimension to simplify matters). We thus recognize
an instance of \eqref{eq:general} in which the vectors $\vct{a}_k$ are
sampled values of complex sinusoids.  X-ray diffraction images are of
this form, and as is well known, permitted the discovery of the double
helix \cite{watson1953structure}. In addition to X-ray crystallography
\cite{harrison1993phase, millane1990phase}, the phase problem has
numerous other applications in the imaging sciences such as
diffraction and array imaging \cite{bunk2007diffractive,
  chai2011array}, optics \cite{walther1963question}, speckle imaging
in astronomy \cite{fienup1987phase}, and microscopy
\cite{miao2008extending}. Other areas where related problems appear
include acoustics \cite{balan2006signal, balan2010signal}, blind
channel estimation in wireless communications \cite{ahmed2012blind,
  ranieri2013phase}, interferometry \cite{demanet2013convex}, quantum
mechanics \cite{corbett2006pauli, reichenbach1965philosophic} and
quantum information \cite{heinosaari2013quantum}.

\subsection{Convex relaxation}

Previous work \cite{candes2013phase,chai2011array} suggested to bring
convex programming techniques to bear on the phase retrieval
problem. Returning to the general formulation \eqref{eq:general}, the
phase problem asks to recover $\x \in \C^n$ subject to data
constraints of the form
\[
\tr(\vct{a}_k\vct{a}_k^* \x\x^*) = y_k, \quad k = 1, \ldots, m, 
\]
where $\tr(\mtx{X})$ is the trace of the matrix $\mtx{X}$.  The idea is then to
lift the problem in higher dimensions: introducing the Hermitian
matrix variable $\mtx{X} \in \mathcal{S}^{n \times n}$, the phase problem is
equivalent to finding $\mtx{X}$ obeying
\begin{equation}
  \label{eq:NP}
  \mtx{X} \succeq 0, \quad \rank(\mtx{X}) = 1, \quad  
  \tr(\vct{a}_k\vct{a}_k^* \mtx{X}) = y_k \text{ for }  k = 1, \ldots, m 
\end{equation}
where, here and below, $\mtx{X} \succeq 0$ means that $\mtx{X}$ is
positive semidefinite.  This problem is not tractable and, by dropping
the rank constraint, is relaxed into
\begin{equation}
  \label{eq:PL}
  \begin{array}{ll}
    \text{minimize}   & \quad \tr(\mtx{X})\\
    \text{subject to} & \quad  \mtx{X} \succeq 0\\
    & \quad  \tr(\vct{a}_k\vct{a}_k^* \mtx{X}) = y_k, \quad k = 1, \ldots, m. 
\end{array}
\end{equation}
PhaseLift \eqref{eq:PL} is a semidefinite program (SDP). If its
solution happens to have rank one and is equal to $\vct{x}\vct{x}^*$,
then a simple factorization recovers $\vct{x}$ up to a global
phase/sign.

We pause to emphasize that in different contexts, similar convex
relaxations for optimizing quadratic objectives subject to quadratic
constraints are known as Schor's semidefinite relaxations, see
\cite[Section 4.3]{nemirovski2001lectures} and
\cite{goemans1995improved} on the MAXCUT problem from graph theory for
spectacular applications of these ideas. For related convex
relaxations of quadratic problems, we refer the interested reader to
the wonderful tutorial \cite{luo2010semidefinite}.

\subsection{This paper}

Numerical experiments \cite{candes2013phase} together with emerging
theory suggest that the PhaseLift approach is surprisingly
effective. On the theoretical side, starting with
\cite{candes2012phaselift}, a line of work established that if the
sampling vectors $\vct{a}_k$ are sufficiently randomized, then the
convex relaxation is provably exact. Assuming that the $\vct{a}_k$'s
are independent random (complex-valued) Gaussian vectors,
\cite{candes2012phaselift} shows that on the order of $n \log n$
quadratic measurements are sufficient to guarantee perfect recovery
via \eqref{eq:PL} with high probability. A subset of the authors
\cite{candes2012solving} reached the same conclusion from on the order
of $n$ equations only, by solving the SDP feasibility problem; to be
sure, \cite{candes2012solving} establishes that the set of matrices
obeying the constraints in \eqref{eq:PL} reduces to a unique point
namely, $\x\x^*$, see \cite{demanet2012stable} for a similar
result.\footnote{\cite{candes2012solving} also establishes
  near-optimal estimation bounds from noisy data.} Finally, inspired
by PhaseLift and the famous MAXCUT relaxation of Goemans and
Williamson, \cite{waldspurger2012phase} proposed another semidefinite
relaxation called PhaseCut whose performance from noiseless data---in
terms of the number of samples needed to achieve perfect
recovery---turns out to be identical to that of PhaseLift.

While this is all reassuring, the problem is that the Gaussian model,
in which each measurement gives us the magnitude of the dot product
$\sum_{t = 0}^{n-1} x[t] a_k[t]$ between the signal and
(complex-valued) Gaussian white noise, is very far from the kind of
data one can collect in an $X$-ray imaging and many related
experiments. The purpose of this paper is to show that the PhaseLift
relaxation is still exact in a physically inspired setup where one can
modulate the signal of interest and then let diffraction occur.

\subsection{Coded diffraction patterns}

Imagine then that we modulate the signal before diffraction. Letting
$d[t]$ be the modulating waveform, we would observe the diffraction
pattern
\begin{equation}
  \label{eq:modulated}
  y_k = \left| \sum_{t = 0}^{n-1} x[t] \bar{d}[t] e^{-i2\pi \omega_k t}
  \right|^2, \quad \omega_k \in \Omega.
\end{equation}
We call this a {\em coded diffraction pattern} (CDP) since it gives us
information about the spectrum of $\{x[t]\}$ modulated by the code
$\{d[t]\}$. There are several ways of achieving modulations of this
type: one can use a phase mask just after the sample, see Figure
\ref{fig:xray}, or use an optical grating to modulate the illumination
beam as mentioned in \cite{loewen1997diffraction}, or even use
techniques from ptychography which scan an illumination patch on an
extended specimen \cite{rodenburg2008ptychography,
  thibault2009probe}. We refer to \cite{candes2013phase} for a more
thorough discussion of such approaches.

\begin{figure}
\centering
\begin{tikzpicture}
\node at (0,0) {\includegraphics[width=0.7\linewidth]{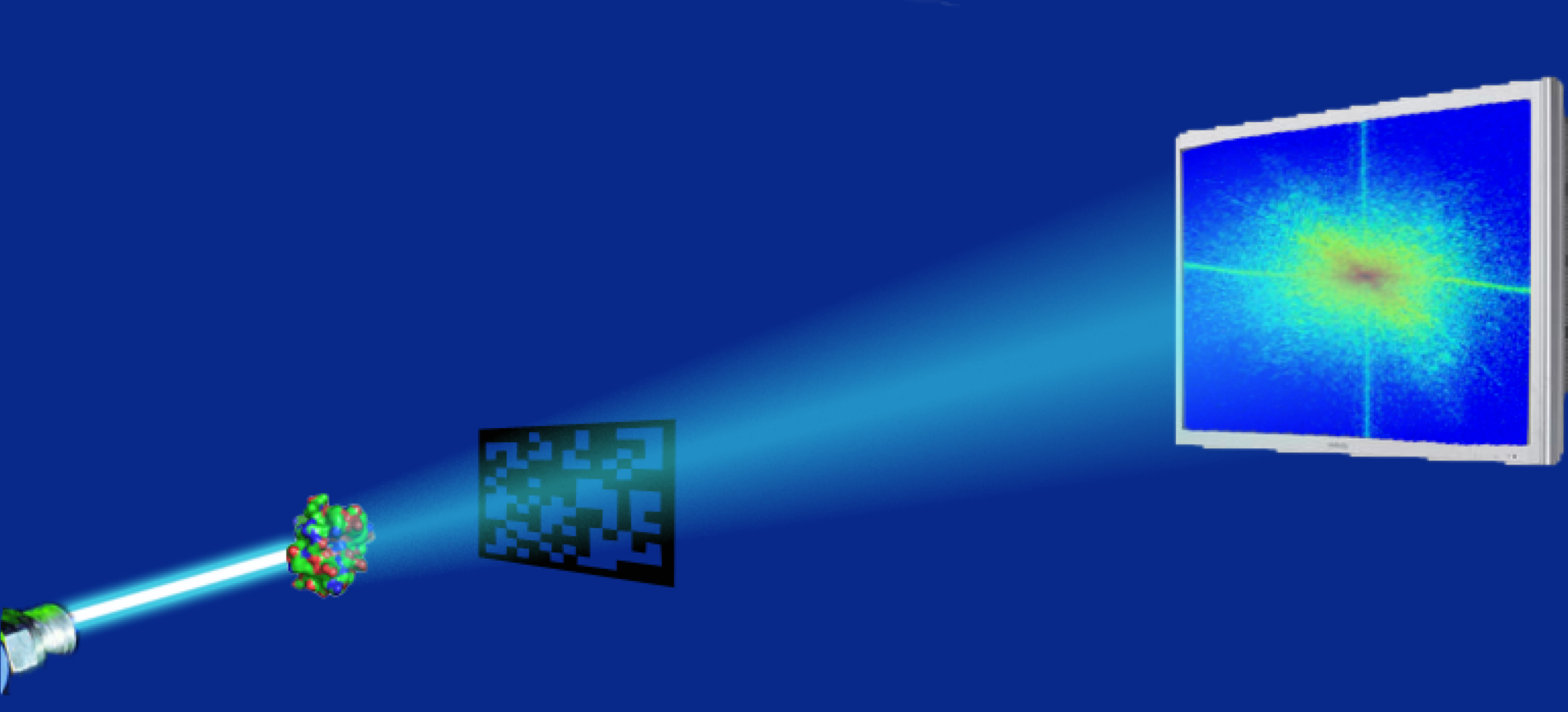}};
\node[white] at (-5.2,-1.3) {source};
\node[white] at (-3.3,-1.8) {sample};
\node[white] at (-1.5,-1.8) {phase plate};
\node[white] at (4,-0.8) {diffraction patterns};
\end{tikzpicture}
\label{fig:xray}
\caption{An illustrative setup for acquiring coded diffraction
  patterns.}
\label{fig:xray}
\end{figure}

In this paper, we analyze such a data collection scheme in which one
uses multiple modulations. Our model for data acquisition is thus as
follows:
\begin{equation}
  \label{eq:model}
  y_{\ell,k} = \left| \sum_{t = 0}^{n-1} x[t] \bar{d}_\ell[t] e^{-i2\pi k t/n}
  \right|^2, \quad \begin{array}{l} 0 \le k \le n-1\\
    1 \le \ell \le L
  \end{array}. 
  \end{equation}
  In words, we collect the magnitude of the discrete Fourier transform
  (DFT) of $L$ modulations of the signal $\x$. In matrix notation,
  letting $\mtx{D}_\ell$ be the diagonal matrix with the modulation
  pattern $d_\ell[t]$ on the diagonal and $\vct{f}_k^*$ be the rows of
  the DFT, we observe
\[
y_{\ell,k} = |\vct{f}_k^* \mtx{D}^*_{\ell} \vct{x}|^2.
\]
We prove that if we use random modulation patterns (random waveforms
$d[t]$), then the solution to \eqref{eq:PL} is exact with high
probability provided that we have sufficiently many CDPs. In fact, we
will see that the feasible set in \eqref{eq:PL} equal to
\begin{equation}
  \label{eq:feasible}
  \{\mtx{X} : \mtx{X} \succeq \mtx{0} \text{ and } 
\mathcal{A}(\mtx{X}) = \vct{y}\}  
\end{equation}
reduces to a single point $\vct{x}\vct{x}^*$. Above $\mathcal{A}:
\mathcal{S}^{n\times n}\rightarrow \mathbb{R}^{m=nL}$
($\mathcal{S}^{n\times n}$ is the space of self-adjoint matrices) is
the linear mapping giving us the linear equalities in \eqref{eq:PL},
\[
\mathcal{A}(\mtx{X}) =
\left\{\vct{f}_k^*\mtx{D}_\ell^*\mtx{X}\mtx{D}_\ell\vct{f}_k\right\}_{\ell,
  k} =
\left\{\tr(\mtx{D}_\ell\vct{f}_k\vct{f}_k^*\mtx{D}_\ell^*
  \mtx{X})\right\}_{\ell,k}.
\]

\subsection{Main result}

Our model assumes random modulations and we work with diagonal
matrices $\mtx{D}_\ell$, $1 \le \ell \le L$, which are i.i.d.~copies
of a matrix $\mtx{D}$, whose entries are themselves i.i.d.~copies of a
random variable $d$. Throughout, we assume that $d$ is symmetric,
obeys $|d| \le M$ as well as the moment conditions
\begin{align}
\label{momentcond}
\E d = 0, \quad \E d^2 = 0,\quad \E \abs{d}^4 =2\E \abs{d}^2.
\end{align}
A random variable obeying these assumptions is said to be {\em
  admissible}.  The reason why we can have $\E d^2 = 0$ while $d \neq
0$ is that $d$ is complex valued. An example of an admissible random
variable is $d = b_1 b_2$, where $b_1$ and $b_2$ are independent and
distributed as
\begin{align}
\label{eq:octanary}
b_1=\begin{cases}
 1&\text{with prob.}\quad \frac{1}{4}\\
-1&\text{with prob.}\quad \frac{1}{4}\\
-i&\text{with prob.}\quad \frac{1}{4}\\
 i&\text{with prob.}\quad \frac{1}{4}
\end{cases}
\quad
\text{and}
\quad
b_2=\begin{cases}
1&\text{with prob.}\quad \frac{4}{5}\\
\sqrt{6}&\text{with prob.}\quad \frac{1}{5}
\end{cases}.
\end{align}
We would like to emphasize that we impose $\E[d^2]=0$ mostly to
simplify our exposition. In fact, the conclusion of Theorem
\ref{mainthm} below remains valid if $\E[d^2] \neq 0$, although we do
not prove this in this paper. In particular, we can also work with $d$
distributed as
\begin{equation}
\label{eq:ternary}
d=\begin{cases}
1&\quad\text{with prob.}\quad \frac{1}{4}\\
0&\quad\text{with prob.}\quad \frac{1}{2}\\
-1&\quad\text{with prob.}\quad \frac{1}{4}
\end{cases}. 
\end{equation}

\begin{theorem} 
\label{mainthm}
Suppose that the modulation is admissible and that the number $L$ of
coded diffraction patterns obeys
\begin{equation*}
  L\ge c \cdot \log^4 n,
\end{equation*}
for some fixed numerical constant $c$. Then with probability at least
$1-{1}/{n}$, the feasibility problem \eqref{eq:feasible} reduces to a
unique point, namely, $\vct{x}\vct{x}^*$, and thus recovers $\vct{x}$
up to a global phase shift.  For $\gamma \ge 1$, setting $L\ge c
  \gamma \log^4 n$ leads to a probability of success at least
  $1-n^{-\gamma}$.
\end{theorem}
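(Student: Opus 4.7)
The plan is to invoke the standard dual-certificate paradigm for uniqueness in semidefinite feasibility. Let $T := \{\vct{x}\vct{v}^* + \vct{v}\vct{x}^* : \vct{v} \in \C^n\}$ denote the tangent space to the rank-one manifold at $\vct{x}\vct{x}^*$ and $T^\perp$ its orthogonal complement in the space of Hermitian matrices. Any other feasible point takes the form $\vct{x}\vct{x}^* + \mtx{H}$ with $\mathcal{A}(\mtx{H}) = \mtx{0}$, and the constraint $\mtx{X} \succeq \mtx{0}$ forces $\mathcal{P}_{T^\perp}(\mtx{H}) = P_{\vct{x}^\perp} \mtx{H} P_{\vct{x}^\perp} \succeq \mtx{0}$. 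It is therefore enough to produce (i) a local isometry for $\mathcal{A}$ restricted to $T$, and (ii) an inexact dual certificate $\mtx{Y} \in \mathrm{range}(\mathcal{A}^*)$ whose $T$-projection lies close to a prescribed small target while $\mathcal{P}_{T^\perp}(\mtx{Y})$ is strictly sign-definite on $\vct{x}^\perp$. Pairing $\iprod{\mtx{Y}}{\mtx{H}} = 0$ with these properties forces $\mathcal{P}_{T^\perp}(\mtx{H}) = \mtx{0}$, and then (i) eliminates $\mathcal{P}_T(\mtx{H})$ as well.

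The first step is a computation of $\E\,\mathcal{A}^*\mathcal{A}$. The admissibility conditions $\E d = \E d^2 = 0$ and $\E|d|^4 = 2\E|d|^2$ collapse the fourth-order pairings to
\[
\E[d_i \bar d_j \bar d_k d_l] = (\E|d|^2)^2\bigl(\delta_{ij}\delta_{kl} + \delta_{ik}\delta_{jl}\bigr),
\]
so that summing over the DFT rows and over $L$ independent patterns yields
\[
\E\,\mathcal{A}^*\mathcal{A}(\mtx{X}) = nL(\E|d|^2)^2 \bigl(\mtx{X} + \tr(\mtx{X})\mtx{I}\bigr),
\]
a well-conditioned, invertible operator. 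After rescaling, the local isometry on $T$ follows by applying matrix Bernstein to the sum of the bounded rank-one summands $\vct{a}_{\ell,k}\vct{a}_{\ell,k}^*$ ($|d| \le M$ and $|f_k[t]|=1$); this also delivers the injectivity of $\mathcal{A}$ on $T$.

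To build the dual certificate I would use a golfing scheme. Partition the $L$ patterns into $\ell_0 \asymp \log n$ independent blocks; letting $\mathcal{A}_i$ denote the (rescaled) partial operator on block $i$, initialize $\mtx{Y}_0 = \mtx{0}$ and $\mtx{W}_0$ equal to the intended target element of $T$, then iterate
\[
\mtx{Y}_i = \mtx{Y}_{i-1} + \mathcal{A}_i^*\mathcal{A}_i(\mtx{W}_{i-1}), \qquad \mtx{W}_i = \mtx{W}_{i-1} - \mathcal{P}_T\mathcal{A}_i^*\mathcal{A}_i(\mtx{W}_{i-1}).
\]
The local isometry makes $\mtx{W}_i$ contract geometrically, so after $\ell_0 \asymp \log n$ stages $\mathcal{P}_T(\mtx{Y}_{\ell_0})$ lies within $1/\mathrm{poly}(n)$ of the target. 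By block independence, the off-tangent pieces $\mathcal{P}_{T^\perp}\mathcal{A}_i^*\mathcal{A}_i(\mtx{W}_{i-1})$ are conditionally centered after subtracting their expectation, which is proportional to $\mtx{I}_{T^\perp}$ and supplies the required sign-definiteness.

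The technical heart is the stage-by-stage operator-norm estimate
\[
\bigl\|\mathcal{P}_{T^\perp}\bigl(\mathcal{A}_i^*\mathcal{A}_i(\mtx{W}_{i-1}) - \E\,\mathcal{A}_i^*\mathcal{A}_i(\mtx{W}_{i-1})\bigr)\bigr\| \lesssim \delta \|\mtx{W}_{i-1}\|,
\]
together with a Frobenius-norm analog governing the contraction $\mtx{W}_i \to \mtx{W}_{i-1}/2$. This is where the coded-diffraction structure makes matters more delicate than in the i.i.d.~Gaussian case: each summand is a product of a deterministic Fourier vector with a random diagonal, so the matrix variance and the $\ell_\infty$-type bounds needed to feed matrix Bernstein (and a matrix Rosenthal variant for higher-moment tails) require careful bookkeeping. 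Each of the $\log n$ golfing stages costs a further factor of roughly $\log^3 n$ in the number of patterns to deliver the required concentration with the desired probability, which yields the overall $L \gtrsim \log^4 n$ requirement. A union bound over stages gives failure probability at most $1/n$, and rescaling $L$ by $\gamma$ amplifies this to $n^{-\gamma}$.
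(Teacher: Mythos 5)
Your overall architecture coincides with the paper's: reduce uniqueness to (i) robust injectivity of $\mathcal{A}$ on the tangent space $T$ and (ii) an inexact dual certificate in $\mathrm{range}(\mathcal{A}^*)$ with small $T$-component and sign-definite $T^\perp$-component; build the certificate by golfing over $\asymp\log n$ independent blocks of $\asymp\log^3 n$ patterns each; and use the admissibility moments to compute $\E\,\mathcal{A}^*\mathcal{A}(\mtx{X})\propto \mtx{X}+\tr(\mtx{X})\mtx{I}$. The parameter accounting ($\log n$ stages times $\log^3 n$ patterns per stage giving $\log^4 n$) also matches.

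There is, however, a genuine gap in the concentration step, which is where all the work lies. You propose to apply matrix Bernstein ``to the sum of the bounded rank-one summands $\vct{a}_{\ell,k}\vct{a}_{\ell,k}^*$.'' Two problems. First, the $n$ summands indexed by $k$ within a single pattern $\ell$ all share the same random diagonal $\mtx{D}_\ell$ and are therefore \emph{not} independent; the only independence is across $\ell$, so the natural unit for a matrix concentration inequality is the whole-pattern average $\frac1n\sum_k(\cdot)$, of which there are only $L$ independent copies. Second, and more seriously, the relevant summands carry the weights $|\vct{f}_k^*\mtx{D}_\ell^*\vct{x}|^2$ (and $(\vct{f}_k^*\mtx{D}_\ell^*\vct{x})^2$ for the conjugate-linear block needed on $T$), and the only deterministic bound on $|\vct{f}_k^*\mtx{D}_\ell^*\vct{x}|$ is $\sqrt{n}M$, so the per-pattern matrices have worst-case operator norm of order $n$; fed into Hoeffding/Bernstein this forces $L\gtrsim n$ rather than $\mathrm{polylog}(n)$. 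The paper's fix is a truncation on the event $\{|\vct{f}_k^*\mtx{D}_\ell^*\vct{x}|\le T_n\}$ with $T_n=\sqrt{2\beta\log n}$: scalar Hoeffding shows the truncation shifts the mean by only $O(n^{-\beta+2})$, and matrix Hoeffding is then applied to the truncated per-pattern sums, whose norm is $O(\log n)$. This truncation is exactly what produces the $\log^3 n$ per stage and is absent from your argument. A secondary issue: you attribute the negative-definiteness of the certificate on $T^\perp$ to the per-stage expectations, which are proportional to $\tr(\mtx{W}_{i-1})\mtx{I}_{T^\perp}$; but $\tr(\mtx{W}_{i-1})$ decays geometrically and has no controlled sign, so these contributions do not accumulate to something uniformly $\preceq-\mtx{I}_{T^\perp}$. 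The paper instead reserves a dedicated block and anchors the certificate with the explicit term $-\tfrac{2}{nL_0}\mathcal{A}_0^*(\vct{1})\approx-2\mtx{I}$, whose $T^\perp$-part supplies the required $-2\mtx{I}_{T^\perp}$, while its $T$-part becomes the golfing target.
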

Thus, in a stylized physical setting, it is possible to recover an
arbitrary signal from a fairly limited number of coded diffraction
patterns by solving an SDP feasibility problem. As mentioned earlier,
the equivalence from \cite{waldspurger2012phase} implies that our
theoretical guarantees automatically carry over to the PhaseCut
formulation.

Mathematically, the phase recovery problem is different than that in
which the sampling vectors are Gaussian as in
\cite{candes2013phase}. The reason is that the measurements in Theorem
\ref{mainthm} are far more structured and far `less random'. Loosely
speaking, our random modulation model uses on the order of $m := nL$
random bits whereas the Gaussian model with the same number of
quadratic equations would use on the order of $m n$ random bits (this
can be formalized by using the notion of entropy from information
theory). A consequence of this difference is that the proof of the
theorem requires new techniques and ideas. Having said this, an open
and interesting research direction is to close the gap---remove the
log factors---and show whether or not perfect recovery can be achieved
from a number of coded diffraction patterns independent of dimension.

The first version of this paper was made publicly available at the
same time as \cite{gross2013partial}, which begins to study the
performance of PhaseLift from non-Gaussian sampling vectors. There,
the authors study sampling schemes from certain finite vector
configurations, dubbed t-designs. These models are different from ours
and do not include our coded diffraction patterns as a special
case. Hence, our results are not comparable. Having said this, there
are similarities in the proof techniques, especially in the role
played by the robust injectivity property, compare our Lemma
\ref{lem:PL} from Section \ref{sec:robust} with \cite[Section
3.3]{gross2013partial}.
    
%

\subsection{Other approaches to phase retrieval and related works}

There are of course other approaches to phase retrieval and we mention
some literature for completeness and to inform the interested reader
of recent progress in this area.  Balan \cite{balan2010signal} studies
a problem where the sampling vectors model a short-time Fourier
transform.  Balan, Casazza and Edidin \cite{balan2007equivalence}
formulate the phase retrieval problem as nonconvex optimization
problem.  In \cite{balan2006signal}, the same authors
\cite{balan2006signal} describe some applications of the phase problem
in signal processing and speech analysis and presents some necessary
and sufficient conditions which guarantee that the solution to
\eqref{eq:general} is unique. Other articles studying the minimal
number of frame coefficient magnitudes for noiseless recovery include
\cite{cahillphase, alexeev2012phase, bandeira2013saving,
  mondragon2013determination,balan2013stability}. We recommend the two
blog posts \cite{blog1} and \cite{blog2} by Mixon and the references
therein for a comprehensive review and discussion of such
results. Lower bounds on the performance of any recovery method from
noisy data are studied in \cite{bandeira2013saving,
  balan2013invertibility, eldar2012phase}.

On the algorithmic side, \cite{balan2009nonlinear} proposes a
nonlinear scheme for phase retrieval having exponential time
complexity in the dimension of the signal $\vct{x}$ while
\cite{balan2009painless} presents a tractable algorithm requiring a
number of measurements at least quadratic in the dimension of the
signal; that is to say, $m\ge c \cdot n^2$ for some constant $c > 0$.

We also wish to mention some recent works aiming at presenting
efficient reconstruction algorithms for generic frames (for certain
types of sampling vectors such as those in \cite{alexeev2012phase,
  bandeira2013phase, raz2013vectorial} fast implementations already
exist) and give two references. The first
\cite{balan2012reconstruction} introduces an iterative regularized
least-square algorithm and establishes convergence of the
algorithm. It is however not known whether this algorithm enjoys
accurate reconstruction guarantees. The second
\cite{netrapalli2013phase} is recent and studies an alternative
minimization approach for phase retrieval, which yields very accurate
but not exact reconstructions from a number of measurements that needs
to be at least on the order of $n\log n^3$ Gaussian measurements. 

There also is a recent body of work studying the phase retrieval under
sparsity assumptions about the signal we wish to recover, see
\cite{shechtman2011sparsity, ohlsson2011compressive, li2012sparse,
  jaganathan2012robust,oymak2012simultaneously} as well as the
references therein. Finally, a different line of work
\cite{alexeev2012phase, bandeira2013phase} studies the phase retrieval
by polarization, see also \cite{raz2013vectorial} for a related
approach. This technique comes with an algorithm that can achieve
recovery using on the order of $\log n$ specially constructed
masks/codes in the noiseless case. However, to the extent of our
knowledge, PhaseLift offers more flexibility in terms of the number
and types of masks that can be used since it can be applied regardless
of the data acquisition scheme. In addition, when dealing with noisy
data PhaseLift behaves very well, see Section \ref{sec:numerical}
below and the experiments in \cite{bandeira2013phase}. 



\section{Numerical Experiments}
\label{sec:numerical}

In this section we carry out some simple numerical experiments to show
how the performance of the algorithm depends on the number of
measurements/masks and how the algorithm is affected by noise. To
solve the optimization problems below we use Auslender and Teboulle
\cite{auslender2006interior} sub-gradient optimization method with a
solver written in the framework provided by TFOCS
\cite{becker2011templates} (The code is available online at
\cite{MSweb}). The stopping criterion is when the Frobenius norm of
the relative error of the objective between two subsequent iterations
falls below $10^{-10}$ or the number of iterations reaches $50,000$,
whichever occurs first. Before presenting the results we introduce the
signal and measurement models we use.

\subsection{Signal models}
\label{sigmodels}
We consider two signal models:
\begin{itemize}
\item \emph{Random low-pass signals.} Here, $\x$ is given by 
  \begin{align*} {x}[t]=\sum_{k=-(M/2-1)}^{M/2} (X_k+iY_k) e^{2\pi i
      (k-1)(t-1)/n},
\end{align*}
with $M = n/8$ and $X_k$ and $Y_k$ are i.i.d.~$\mathcal{N}(0,1)$.

\item \emph{Random Gaussian signals.} In this model, $\x \in \C^n$ is
  a random complex Gaussian vector with i.i.d.~entries of the form
  $x[t] = X+iY$ with $X$ and $Y$ distributed as $\mathcal{N}(0,1)$;
  this can be expressed as
  \begin{align*} {x}[t]=\sum_{k=-(n/2-1)}^{n/2} (X_k+iY_k) e^{2\pi i
      (k-1)(t-1)/n},
\end{align*}
where $X_k$ and $Y_k$ are are i.i.d.~$\mathcal{N}(0,1/8)$ so that the
low-pass model is a `bandlimited' version of this high-pass random
model (variances are adjusted so that the expected power is the same).
\end{itemize}

\subsection{Measurement models}
\label{measurementmodels}
We perform simulations based on four different kinds of measurements:

\begin{itemize}
\item \emph{Gaussian measurements.} We sample $m=nL$ random complex
  Gaussian vectors $\vct{a}_k$ and use measurements of the form
  $|\vct{a}_k^*\vct{x}|^2$.

\item \emph{Binary modulations/codes.} We sample ($L-1$) binary codes
  distributed as
\begin{equation*}
d=\begin{cases}
1&\quad\text{with prob.}\quad \frac{1}{2}\\
0&\quad\text{with prob.}\quad \frac{1}{2}
\end{cases}
\end{equation*}
together with a regular diffraction pattern ($d[t] = 1$ for all
$t$). 

\item \emph{Ternary modulations/codes.} We sample ($L-1$) ternary
  codes distributed as \eqref{eq:ternary} 
together with a regular diffraction pattern.

\item \emph{Octanary modulations/codes.} Here, the codes are
  distributed as \eqref{eq:octanary}.
\end{itemize}

\subsection{Phase transitions}

We carry out some numerical experiments to show how the performance of
the algorithm depends on the number of measurements/coded
patterns. For this purpose we consider $50$ trials. In each trial we
generate a random complex vector $\vct{x}\in\C^n$ (with $n=128$) from
both signal models and gather data according to the four different
measurement models above.  For each trial we solve the following
optimization problem
\begin{equation}
\label{approxSDP}
    \min \quad \frac{1}{2}\twonorm{\vct{b}-\mathcal{A}(\mtx{X})}^2+\lambda\text{tr}(\mtx{X})\quad\text{subject to}\quad\mtx{X}\succeq \mtx{0}\\
\end{equation}
with $\lambda=10^{-3}$ (Note that the solution to \eqref{approxSDP} as $\lambda$ tends to zero will equal to the optimal solution of \eqref{eq:PL}).  

In Figure \ref{PTs} we report the empirical probability of success for
different signal and measurement models with different number of
measurements. We declare a trial successful if the relative error of
the reconstruction
($\fronorm{\hat{\mtx{X}}-\vct{x}\vct{x}^*}/\fronorm{\vct{x}\vct{x}^*}$)
falls below $10^{-5}$). These plots suggest that for the type of
models studied in this paper six coded patterns are sufficient for
exact recovery via convex programming.

\begin{figure}[h]
        \centering
\begin{tikzpicture}[scale=1] 
\begin{groupplot}[group style={group size=2 by 1,horizontal sep=1cm,xlabels at=edge bottom, ylabels at=edge left,xticklabels at=edge bottom},xlabel=L,
        ylabel=Probability of success,
        legend pos= south east]
 \nextgroupplot[title={random Gaussian signals}]
        
        \addplot +[mark=*,solid,blue,line width=1pt] table[x index=0,y index=2]{./prob};
        \addlegendentry{binary masks}
        \addplot +[mark=square ,solid,teal,line width=1pt] table[x index=0,y index=1]{./prob};
        \addlegendentry{ternary masks}
        \addplot +[mark=triangle ,solid,red,line width=1pt] table[x index=0,y index=3]{./prob};
        \addlegendentry{octanary masks}
        \addplot +[mark=diamond ,solid,orange,line width=1pt] table[x index=0,y index=4]{./prob};\addlegendentry{Gaussian meas.}

 \nextgroupplot[title={random low-pass signals}]
        \addplot +[mark=*,solid,blue,line width=1pt] table[x index=0,y index=2]{./probsmooth};\addlegendentry{binary masks}
        \addplot +[mark=square ,solid,teal,line width=1pt] table[x index=0,y index=1]{./probsmooth};\addlegendentry{ternary masks}
        \addplot +[mark=triangle ,solid,red,line width=1pt] table[x index=0,y index=3]{./probsmooth};\addlegendentry{octanary masks}
        \addplot +[mark=diamond ,solid,orange,line width=1pt] table[x index=0,y index=4]{./probsmooth};\addlegendentry{Gaussian meas.}
 \end{groupplot}
\end{tikzpicture}
\caption{ Empirical probability of success based on $50$ random trials
  for different signal/measurement models and a varied number of
  measurements. A value of $L$ on the x-axis means that we have a
  total of $m=Ln$ samples.}
\label{PTs}
\end{figure}
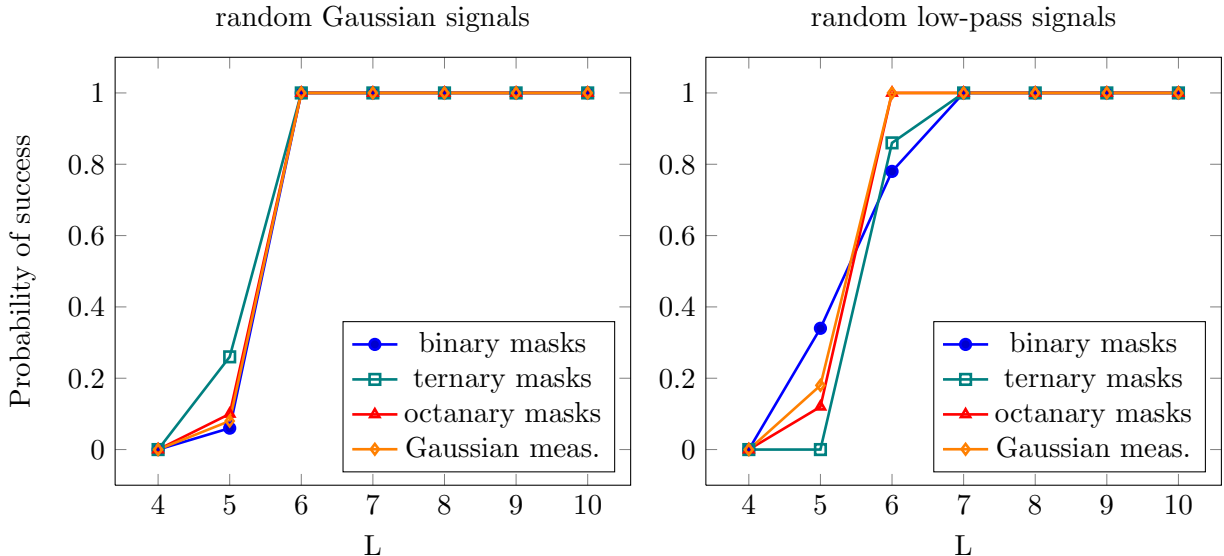

\subsection{Noisy measurements}

We now study how the performance of the algorithm behaves in the
presence of noise. We consider Poisson noise which is the usual noise
model in optics. More, specifically we assume that the measurements
$\{y_k\}_{k=1}^m$ is a sequence of independent samples from the
Poisson distributions Poi$(\mu_k)$, where $\mu_k=|\vct{a}_k^*\x|^2$
correspond to the noiseless measurements. The Poisson log-likelihood
for independent samples has the form $\sum_k y_k\log \mu_k-\mu_k$ (up
to an additive constant factor). Following a classical fitting
approach we balance a maximum likelihood term with the trace norm in
the relaxation \eqref{eq:PL}:
 \begin{align*}
   \min\sum_k[\mu_k-y_k\log
   \mu_k]+\lambda\text{tr}(\mtx{X})\quad\text{subject
     to}\quad\vct{\mu}=\mathcal{A}(\mtx{X})\quad\text{and}\quad\mtx{X}\succeq\mtx{0}.
 \end{align*}
 The test signal is again a complex random signal sampled according to
 the two models described in Section \ref{sigmodels}. We use eight
 CDP's according to the three models described in Section
 \ref{measurementmodels}. Poisson noise is
 adjusted so that the SNR levels range from $10$ to $50$dB. Here,
 $\text{SNR}
 =\twonorm{\mathcal{A}(\x\x^*)}/\twonorm{\vct{b}-\mathcal{A}(\x\x^*)}$
 is the signal-to-noise ratio. For the regularization parameter we use
 $\lambda=1/\text{SNR}$. (In these experiments, the value of SNR is
 known. The result, however, is rather insensitive to the choice of
 the parameter $\lambda$ and a good choice for the regularization
 parameter $\lambda$ can be obtained by cross validation.) For each
 SNR level we repeat the experiment ten times with different random
 noise and different random CDP's.

 Figure \ref{SNRs} shows the average relative MSE (in dB) versus the
 SNR (also in dB). More precisely, the values of
 $10\log_{10}(\text{rel.~MSE})$ are plotted, where $\text{rel.~MSE} =
 \fronorm{\hat{\mtx{X}}-\x\x^*}^2/\fronorm{\hat{\mtx{X}}}^2$. These
 figures indicate that the performance of the algorithm degrades
 linearly as the SNR decreases (on a dB/dB scale). Empirically, the
 slope is close to -1, which means that the MSE scales like the
 noise. Together with the low offset, these features indicate that all
 is as in a well-conditioned-least squares problem.
 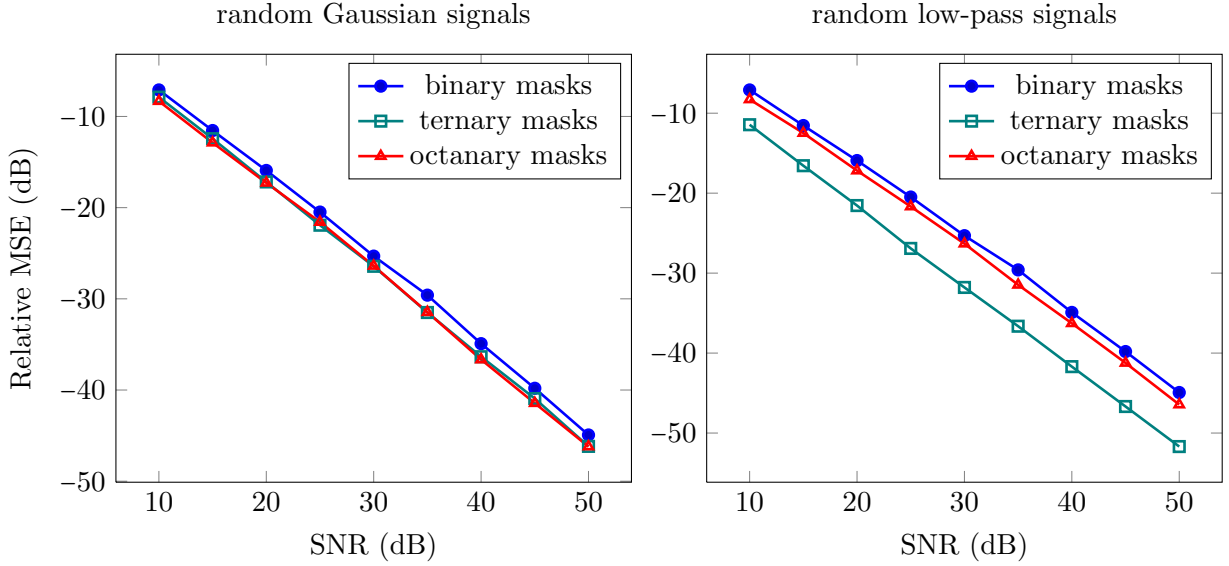
\begin{figure}[h]
        \centering
\begin{tikzpicture}[scale=1] 
\begin{groupplot}[group style={group size=2 by 1,horizontal sep=1cm,xlabels at=edge bottom, ylabels at=edge left,xticklabels at=edge bottom},xlabel=SNR (dB),
        ylabel=Relative MSE (dB)]
 \nextgroupplot[title={random Gaussian signals}]
        
        \addplot +[mark=*,solid,blue,line width=1pt] table[x index=0,y index=1]{./SNR};\addlegendentry{binary masks}
        \addplot +[mark=square ,solid,teal,line width=1pt] table[x index=0,y index=2]{./SNR};\addlegendentry{ternary masks}
        \addplot +[mark=triangle ,solid,red,line width=1pt] table[x index=0,y index=3]{./SNR};\addlegendentry{octanary masks}
        
        \nextgroupplot[title={random low-pass signals}]
        \addplot +[mark=*,solid,blue,line width=1pt] table[x index=0,y index=1]{./SNR};\addlegendentry{binary masks}
        \addplot +[mark=square ,solid,teal,line width=1pt] table[x index=0,y index=2]{./SNRsmooth};\addlegendentry{ternary masks}
        \addplot +[mark=triangle ,solid,red,line width=1pt] table[x index=0,y index=3]{./SNRsmooth};\addlegendentry{octanary masks}
\end{groupplot}
\end{tikzpicture}
\caption{SNR versus relative MSE on a dB-scale for different kinds of
  signal/measurement models. The linear relationship between SNR and
  MSE (on the dB scale) is apparent. The MSE behaves as in a
    well-conditioned least-squares problem.}
\label{SNRs}
\end{figure}

\section{Proofs}

We prove our results in this section. Before we begin, we introduce
some notation.  We recall that the random variable $d$ is admissible,
i.e. bounded i.e.~$|d|\le M$, symmetric, and obeying moment
constraints
\begin{align}
  \E d =0,
  \quad\E d^2 =0,\quad\E \abs{d}^4 =2\E \abs{d}^2.
\end{align}
Without loss of generality we also assume that $\E |d|^2 = 1$.
Throughout $\mtx{D}$ is a diagonal matrix with i.i.d.~entries
distributed as $d$. 
For a vector $\vct{y}\in\C^n$ we use
$\vct{y}^T$ and $\vct{y}^*$ to denote the transpose and complex
conjugate of the vector $\vct{y}$. We also use $\bar{\vct{y}}$ to
denote elementwise conjugation of the entries of $\vct{y}$. Since this
is less standard, we prefer to be concrete as to avoid ambiguity: for
example,
\begin{align*}
\begin{bmatrix}1+i \\1+2i\end{bmatrix}^T=\begin{bmatrix}1+i &1+2i\end{bmatrix},\quad\begin{bmatrix}1+i \\1+2i\end{bmatrix}^*=\begin{bmatrix}1-i &1-2i\end{bmatrix},\quad\overline{\begin{bmatrix}1+i \\1+2i\end{bmatrix}}=\begin{bmatrix}1-i \\1-2i\end{bmatrix}.
\end{align*}
Continuing, $\|\mtx{X}\|$ is the spectral or operator norm of a matrix
$\mtx{X}$. Finally, $\vct{1}$ is a vector with all entries equal to
one.

Throughout, we assume that the fixed vector $\x$ we seek to recover is
unit normed, i.e.~$\twonorm{\vct{x}}=1$. Throughout $T$ is the linear
subspace
\[
T = \{\mtx{X} = \vct{x}\vct{y}^*+\vct{y}\vct{x}^* \, : \,
\vct{y}\in\C^n\}.
\]
This subspace may be interpreted as the tangent space at
$\vct{x}\vct{x}^*$ to the manifold of Hermitian matrices of rank
$1$. Below $T^\perp$ is the orthogonal complement to $T$.  For a
linear subspace $V$ of Hermitian matrices, we use $\mtx{Y}_V$ or
$\mathcal{P}_V(\mtx{Y})$ to denote the orthogonal projection of
$\mtx{Y}$ onto $V$. With this, the reader will check that
$\mtx{Y}_{T^\perp} = (\mtx{I} - \vct{x}\vct{x}^*) \mtx{Y} (\mtx{I} -
\vct{x}\vct{x}^*)$.

\subsection{Preliminaries} 

It is useful to record two identities that shall be used multiple
times, and defer the proofs to the Appendix.
\begin{lemma}
\label{teo:expectation1}
For any fixed vector $\vct{x} \in \mathbb{C}^n$
\[
\E\left(\frac{1}{nL}\mathcal{A}^*\mathcal{A}(\vct{x}\vct{x}^*)\right)=\E\left(\frac{1}{n}\sum_{k=1}^n \left|\vct{f}_k^*\mtx{D}^*\vct{x}\right|^2\mtx{D}\vct{f}_k\vct{f}_k^*\mtx{D}^*\right)=\vct{x}\vct{x}^*+\twonorm{\vct{x}}^2\mtx{I}.
\]
\end{lemma}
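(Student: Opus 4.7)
The first equality is bookkeeping: the adjoint of $\mathcal{A}$ satisfies $\mathcal{A}^*(\vct{y})=\sum_{\ell,k}y_{\ell,k}\mtx{D}_\ell\vct{f}_k\vct{f}_k^*\mtx{D}_\ell^*$, hence $\frac{1}{nL}\mathcal{A}^*\mathcal{A}(\vct{x}\vct{x}^*)=\frac{1}{nL}\sum_{\ell,k}|\vct{f}_k^*\mtx{D}_\ell^*\vct{x}|^2\,\mtx{D}_\ell\vct{f}_k\vct{f}_k^*\mtx{D}_\ell^*$, and the i.i.d.~structure of the $\mtx{D}_\ell$'s reduces the average over $\ell$ to a single copy. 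The substance is therefore the second equality, which I approach by a direct entrywise computation. Writing $\vct{v}_k=\mtx{D}\vct{f}_k$, the summand is $|\vct{v}_k^*\vct{x}|^2\,\vct{v}_k\vct{v}_k^*$, and its $(p,q)$ entry expands as
\[
\sum_{a,b}\bar{d}[a]\,d[b]\,d[p]\,\bar{d}[q]\cdot \bar{f}_k[a]\,f_k[b]\,f_k[p]\,\bar{f}_k[q]\cdot x[a]\,\bar{x}[b].
\]
I plan to take the expectation over $\mtx{D}$ first using the moment conditions, and then use $|f_k[t]|=1$ so that all Fourier factors collapse to $1$ and the average $\frac{1}{n}\sum_k$ becomes trivial.

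The key step is to identify which index patterns $(a,b,p,q)$ make $\E[\bar{d}[a]\,d[b]\,d[p]\,\bar{d}[q]]$ nonzero. Because the $d[t]$ are independent, symmetric, and obey $\E d=\E d^2=0$, $\E|d|^2=1$, $\E|d|^4=2$, any pattern that pairs two non-conjugated (or two conjugated) $d$'s at distinct sites dies thanks to $\E d^2=0$, and any pattern in which a single site appears exactly three times dies because the symmetry of $d$ forces $\E[|d|^2 d]=0$. The only surviving patterns are (i) $a=b\neq p=q$, giving $1$; (ii) $a=p\neq b=q$, giving $1$; and (iii) $a=b=p=q$, giving $2$.

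With $|f_k[t]|=1$, case (i) contributes $\|\vct{x}\|^2-|x[p]|^2$ on the diagonal, case (ii) contributes $x[p]\bar{x}[q]$ off the diagonal, and case (iii) contributes $2|x[p]|^2$ on the diagonal; summing, the $(p,q)$ entry equals $x[p]\bar{x}[q]+\delta_{pq}\,\|\vct{x}\|^2$, which is exactly the $(p,q)$ entry of $\vct{x}\vct{x}^*+\|\vct{x}\|^2\mtx{I}$. The only mild obstacle is the fourth-moment bookkeeping; once symmetry and the $\E d^2=0$ condition are used to eliminate the vanishing patterns, the remainder is mechanical.
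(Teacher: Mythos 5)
Your proposal is correct and is essentially the paper's own argument: a direct entrywise computation of the $(p,q)$ entry, with the fourth-moment bookkeeping isolating exactly the patterns $a=b\neq p=q$, $a=p\neq b=q$, and $a=b=p=q$ (the pattern $a=q\neq b=p$ dying by $\E d^2=0$), yielding $x[p]\bar x[q]+\delta_{pq}\twonorm{\vct{x}}^2$. The only difference is order of operations --- you take $\E_{\mtx{D}}$ first and then observe that $|f_k[t]|=1$ trivializes the average over $k$, whereas the paper first averages over $k$ via DFT orthogonality to get the congruence constraint $a+q\overset{n}{\equiv}b+p$ and then takes expectations --- which is an immaterial variation.
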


\begin{lemma}
\label{teo:expectation2}
For any fixed $\vct{x} \in \mathbb{C}^n$, 
\begin{equation*}
  \E\left(\frac{1}{n}\sum_{k=1}^n{\big(\vct{f}_k^*\mtx{D}^*\vct{x}\big)}^2\mtx{D}\vct{f}_k\vct{f}_k^T\mtx{D}\right)=2\vct{x}\vct{x}^T.
\end{equation*}
\end{lemma}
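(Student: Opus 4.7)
The plan is a direct coordinate expansion followed by a short case analysis on the moments of $d$. Writing $\vct{f}_k^*[t]=e^{-2\pi i kt/n}$, one has
\[
(\vct{f}_k^*\mtx{D}^*\vct{x})^2 = \sum_{s,t}\bar d_s\bar d_t\, x_s x_t\, e^{-2\pi i k(s+t)/n},
\qquad
(\mtx{D}\vct{f}_k\vct{f}_k^T\mtx{D})_{a,b} = d_a d_b\, e^{2\pi i k(a+b)/n}.
\]
Averaging over $k$ and invoking the discrete Fourier identity $\frac{1}{n}\sum_{k=0}^{n-1} e^{2\pi i k m/n}=\mathbf{1}\{m\equiv 0\pmod n\}$ collapses the phases, so the $(a,b)$-entry of the matrix on the left-hand side becomes
\[
d_a d_b\sum_{s+t\equiv a+b \pmod n} \bar d_s \bar d_t\, x_s x_t.
\]

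Next I would take the expectation, using independence of the diagonal entries of $\mtx{D}$ together with the moment identities $\E d=\E d^2=0$, $\E|d|^2=1$, and $\E|d|^4=2$ (and their complex-conjugate versions, which follow automatically). A brief split on whether $a=b$ handles the sum. When $a\neq b$, the mod-$n$ constraint is rigid: specifying $s\in\{a,b\}$ forces $t$ to be the other element, so the only surviving pairings are $(s,t)=(a,b)$ and $(s,t)=(b,a)$, each contributing $(\E|d|^2)^2 x_a x_b=x_a x_b$; every other configuration factors, via independence, into a product containing either $\E d=0$ or $\E d^2=0$ and hence vanishes. The total is $2 x_a x_b$.

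When $a=b$, the constraint $s+t\equiv 2a$ has $s=t=a$ as its only ``diagonal'' solution, contributing $\E|d|^4\cdot x_a^2=2x_a^2$; every other $(s,t)$ places at least one index distinct from $a$ among $\{s,t\}$, and the resulting expectation contains either $\E\bar d=0$ (if an index is isolated) or $\E\bar d^2=0$ (if two $\bar d$'s land on a common new index), so it vanishes. Assembling the two cases, the $(a,b)$-entry of the expected matrix equals $2 x_a x_b=2(\vct{x}\vct{x}^T)_{a,b}$ in every case, which is the claim. I do not anticipate a serious obstacle; the only step requiring care is the bookkeeping that verifies every non-surviving configuration does contain an isolated first or second moment of $d$.
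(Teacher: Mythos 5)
Your proposal is correct and follows essentially the same route as the paper's own proof: expand $(\vct{f}_k^*\mtx{D}^*\vct{x})^2$ and the outer product entrywise, use the DFT orthogonality relation to reduce the $k$-average to the congruence $s+t\equiv a+b \pmod n$, and then classify the surviving index configurations via the moment conditions $\E d=\E d^2=0$, $\E|d|^2=1$, $\E|d|^4=2$. The case analysis (two surviving pairings for $a\neq b$, the single term $s=t=a$ for $a=b$) matches the paper's exactly, up to relabeling of indices.
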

Next, we present two simple intermediate results we shall also
use. The proofs are also in the Appendix.
\begin{lemma}
\label{identityinrange}
Fix $\delta > 0$ and suppose the number $L$ of CDP's obeys $L \ge c
\log n$ for some sufficiently large numerical constant $c$. Then with
probability at least $1-1/n^2$,
\begin{align*}
  \left\|\frac{1}{nL}\mathcal{A}^*(\vct{1})-\mtx{I}_n\right\|\le
  \delta.
\end{align*}
\end{lemma}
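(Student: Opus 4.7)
The plan is to compute $\mathcal{A}^*(\vct{1})$ in closed form, observe that it is diagonal, and then reduce the operator-norm bound to a scalar concentration bound on the diagonal entries.

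First, by definition of the adjoint,
\[
\mathcal{A}^*(\vct{1}) = \sum_{\ell=1}^L \sum_{k=0}^{n-1} \mtx{D}_\ell \vct{f}_k \vct{f}_k^* \mtx{D}_\ell^*
= \sum_{\ell=1}^L \mtx{D}_\ell \Bigl(\sum_{k=0}^{n-1} \vct{f}_k \vct{f}_k^*\Bigr) \mtx{D}_\ell^*.
\]
The inner sum $\sum_k \vct{f}_k \vct{f}_k^*$ equals $n \mtx{I}_n$ by the standard orthogonality of the Fourier basis. Since $\mtx{D}_\ell$ is diagonal, $\mtx{D}_\ell \mtx{D}_\ell^*$ is the diagonal matrix with entries $|d_\ell[t]|^2$, and so
\[
\frac{1}{nL}\mathcal{A}^*(\vct{1}) - \mtx{I}_n
= \operatorname{diag}\!\Bigl(\frac{1}{L}\sum_{\ell=1}^L |d_\ell[t]|^2 - 1\Bigr)_{t=0}^{n-1}.
\]

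Since the matrix is diagonal, its spectral norm equals the maximum of the absolute values of its diagonal entries:
\[
\Bigl\|\tfrac{1}{nL}\mathcal{A}^*(\vct{1})-\mtx{I}_n\Bigr\|
= \max_{0 \le t \le n-1} \Bigl|\tfrac{1}{L}\sum_{\ell=1}^L |d_\ell[t]|^2 - 1\Bigr|.
\]
For each fixed $t$, the variables $|d_1[t]|^2,\ldots,|d_L[t]|^2$ are i.i.d., bounded in $[0,M^2]$, and have mean $\E|d|^2 = 1$ by our normalization. Hoeffding's inequality therefore yields
\[
\P\!\Bigl(\Bigl|\tfrac{1}{L}\sum_{\ell=1}^L |d_\ell[t]|^2 - 1\Bigr| \ge \delta\Bigr) \le 2\exp\!\Bigl(-\tfrac{2L\delta^2}{M^4}\Bigr).
\]

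Finally, a union bound over $t = 0,\ldots,n-1$ gives failure probability at most $2n \exp(-2L\delta^2/M^4)$. Taking $L \ge c \log n$ with $c$ a sufficiently large numerical constant (depending only on $\delta$ and $M$, both of which are fixed in our setup) drives this bound below $1/n^2$, which is the desired conclusion. There is essentially no obstacle here, the only mild point is keeping track of the dependence of the constant $c$ on $M$ and $\delta$, but this is absorbed into the statement's phrasing.
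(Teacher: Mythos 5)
Your proof is correct and follows essentially the same route as the paper: reduce $\frac{1}{nL}\mathcal{A}^*(\vct{1})$ to the diagonal matrix $\frac{1}{L}\sum_\ell \mtx{D}_\ell\mtx{D}_\ell^*$ via Fourier orthogonality, then apply scalar Hoeffding to each diagonal entry and a union bound. (Your exponent $M^4$ in Hoeffding's bound is in fact the more careful one for variables bounded in $[0,M^2]$; this affects only the constant $c$.)
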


\begin{lemma}
\label{upperRIP1}
For all positive semidefinite matrices $\mtx{X}$, it holds
\begin{align*}
\frac{1}{nL}\onenorm{\mathcal{A}(\mtx{X})}\le M^2\tr(\mtx{X}). 
\end{align*}
\end{lemma}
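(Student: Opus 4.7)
The plan is to exploit the positive semidefiniteness of $\mtx{X}$ to collapse the $\ell_1$-norm into a plain sum, and then use the Parseval-style completeness of the DFT basis, so that the bound becomes a purely deterministic consequence of the pointwise bound $|d|\le M$. No probabilistic argument is needed.

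First, observe that each component of $\mathcal{A}(\mtx{X})$ has the form $\vct{f}_k^*\mtx{D}_\ell^*\mtx{X}\mtx{D}_\ell\vct{f}_k = \tr(\mtx{D}_\ell\vct{f}_k\vct{f}_k^*\mtx{D}_\ell^*\mtx{X})$, which is nonnegative because $\mtx{X}\succeq \mtx{0}$ and each $\mtx{D}_\ell\vct{f}_k\vct{f}_k^*\mtx{D}_\ell^*$ is itself positive semidefinite. Therefore
\[
\onenorm{\mathcal{A}(\mtx{X})} \;=\; \sum_{\ell=1}^{L}\sum_{k=0}^{n-1} \tr\!\bigl(\mtx{D}_\ell\vct{f}_k\vct{f}_k^*\mtx{D}_\ell^*\,\mtx{X}\bigr).
\]

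Next, I would push the inner sum through the trace and exploit that the vectors $\{\vct{f}_k\}_{k=0}^{n-1}$ are the (column forms of the) rows of the $n\times n$ DFT matrix and so satisfy the completeness relation $\sum_{k=0}^{n-1}\vct{f}_k\vct{f}_k^* = n\mtx{I}_n$. This turns the inner sum into $n\,\tr(\mtx{D}_\ell\mtx{D}_\ell^*\mtx{X})$. Since $\mtx{D}_\ell\mtx{D}_\ell^*$ is the diagonal matrix with entries $|d_\ell[t]|^2\le M^2$, we have the operator inequality $\mtx{D}_\ell\mtx{D}_\ell^*\preceq M^2\mtx{I}_n$, and combining this with $\mtx{X}\succeq \mtx{0}$ yields $\tr(\mtx{D}_\ell\mtx{D}_\ell^*\mtx{X})\le M^2\tr(\mtx{X})$.

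Summing over $\ell$ gives $\onenorm{\mathcal{A}(\mtx{X})}\le nLM^2\tr(\mtx{X})$, and dividing by $nL$ produces the stated inequality. There is essentially no obstacle; the only small point to be careful about is the normalization convention for the DFT rows (so that $\sum_k\vct{f}_k\vct{f}_k^* = n\mtx{I}_n$ rather than $\mtx{I}_n$), which is consistent with how $\mathcal{A}$ is defined earlier in the paper.
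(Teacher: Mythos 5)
Your proof is correct and rests on the same ingredients as the paper's: nonnegativity of the entries of $\mathcal{A}(\mtx{X})$ for $\mtx{X}\succeq\mtx{0}$, the completeness relation $\sum_k\vct{f}_k\vct{f}_k^*=n\mtx{I}_n$ (equivalently Parseval), and the pointwise bound $|d|\le M$. The only cosmetic difference is that the paper first reduces to rank-one matrices via the eigendecomposition of $\mtx{X}$ and applies Parseval to each $\twonorm{\mtx{D}_\ell^*\vct{v}_j}^2$, whereas you keep $\mtx{X}$ general and carry out the same computation inside the trace.
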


Finally, the last piece of mathematics is the matrix Hoeffding
inequality
\begin{lemma}\cite[Theorem 1.3]{tropp2012user}
\label{mathoef}
Let $\{S_\ell\}_{\ell=1}^L$ be a sequence of independent random $n
\times n$ self-adjoint matrices. Assume that each random matrix obeys
\begin{align}
\label{condhoeffding}
\E \mtx{S}_\ell=\mtx{0}\quad\text{and}\quad \|\mtx{S}_\ell\|\preceq
\Delta\quad\text{almost surely}.
\end{align}
Then for all $t\ge0$,
\begin{equation}
\label{eq:hoeffding}
\mathbb{P}\Bigl(\frac{1}{L}\|\sum_{\ell=1}^L\mtx{S}_\ell\|\ge
t\Bigr)\le 2n \exp\Bigl(-\frac{Lt^2}{8\Delta^2}\Bigr).
\end{equation}
\end{lemma}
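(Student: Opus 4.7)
The plan is to follow the standard Laplace-transform approach to matrix concentration (Ahlswede--Winter as sharpened by Tropp), which reduces the problem to a scalar Chernoff calculation plus a suitable matrix MGF bound.

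First, I would reduce the operator-norm event to a one-sided maximum-eigenvalue event. Since $\|\sum_\ell \mtx{S}_\ell\| = \max\{\lambda_{\max}(\sum_\ell \mtx{S}_\ell),\lambda_{\max}(-\sum_\ell \mtx{S}_\ell)\}$ and $-\mtx{S}_\ell$ satisfies the hypotheses \eqref{condhoeffding} whenever $\mtx{S}_\ell$ does, a union bound costs only a factor of $2$ and it suffices to control $\mathbb{P}(\lambda_{\max}(\sum_\ell \mtx{S}_\ell)\ge Lt)$. Then I apply the matrix Chernoff/Markov trick: for every $\theta>0$,
\begin{equation*}
\mathbb{P}\bigl(\lambda_{\max}(\textstyle\sum_\ell \mtx{S}_\ell)\ge Lt\bigr)
\le e^{-\theta L t}\,\E\,\tr\exp\bigl(\theta\textstyle\sum_\ell \mtx{S}_\ell\bigr),
\end{equation*}
which uses $\lambda_{\max}(\mtx{A})\le \tr\, e^{\mtx{A}}$ for self-adjoint $\mtx{A}$.

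Second, I would linearize the expected trace-exponential across the independent summands via the master subadditivity inequality (a direct consequence of Lieb's concavity theorem for $\mtx{X}\mapsto \tr\exp(\mtx{H}+\log \mtx{X})$): for independent self-adjoint $\mtx{S}_\ell$,
\begin{equation*}
\E\,\tr\exp\bigl(\theta\textstyle\sum_\ell \mtx{S}_\ell\bigr)
\le \tr\exp\bigl(\textstyle\sum_\ell\log \E\, e^{\theta \mtx{S}_\ell}\bigr).
\end{equation*}
This step is what makes the noncommutative version work in the same way as its scalar counterpart.

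Third, I would establish the matrix analog of Hoeffding's moment-generating function bound: if $\E \mtx{S}=\mtx{0}$ and $\|\mtx{S}\|\le \Delta$, then
$\log \E\, e^{\theta \mtx{S}} \preceq \tfrac{\theta^2\Delta^2}{2}\,\mtx{I}$. The idea is to diagonalize $\mtx{S}$ sample-wise, observe every eigenvalue lies in $[-\Delta,\Delta]$, and apply the scalar Hoeffding lemma $\E e^{\theta Z}\le e^{\theta^2\Delta^2/2}$ to each eigenvalue, then use operator convexity of $\log$ together with $e^{\theta \mtx{S}}\preceq \cosh(\theta\Delta)\,\mtx{I}+\sinh(\theta\Delta)\,\mtx{S}/\Delta$ to transfer this bound from the spectrum to the matrix MGF. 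This is the step I expect to be the main obstacle, since standard scalar Taylor-expansion arguments do not port over unchanged because of noncommutativity; the workaround is to control $\mtx{S}^2\preceq \Delta^2\,\mtx{I}$ and pass through an operator-monotone inequality for $\log$.

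Finally, combining the three ingredients gives
\begin{equation*}
\mathbb{P}\bigl(\lambda_{\max}(\textstyle\sum_\ell \mtx{S}_\ell)\ge Lt\bigr)
\le n\exp\bigl(L\theta^2\Delta^2/2-\theta L t\bigr),
\end{equation*}
and optimizing in $\theta=t/\Delta^2$ yields $n\exp(-Lt^2/(2\Delta^2))$. Doubling for the two-sided bound and absorbing the mildly weaker constant coming from the matrix Hoeffding lemma produces the stated tail $2n\exp(-Lt^2/(8\Delta^2))$ in \eqref{eq:hoeffding}. The heart of the argument is Lieb's theorem in step two and the matrix Hoeffding MGF bound in step three; everything else is bookkeeping.
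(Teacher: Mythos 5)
The paper offers no proof of this lemma: it is imported verbatim from the cited reference \cite[Theorem 1.3]{tropp2012user}, so there is no internal argument to compare against. Your sketch is a faithful outline of the Laplace-transform method used there, and it is essentially sound: reduce to $\lambda_{\max}$ by symmetry of the hypotheses under $\mtx{S}_\ell\mapsto-\mtx{S}_\ell$, apply the trace-exponential Chernoff bound, linearize over the independent summands via Lieb's concavity theorem, and control each matrix moment generating function. Two small points deserve care. First, in the Markov step the correct inequality is $e^{\lambda_{\max}(\mtx{A})}=\lambda_{\max}(e^{\mtx{A}})\le\tr e^{\mtx{A}}$, not $\lambda_{\max}(\mtx{A})\le\tr e^{\mtx{A}}$ as written. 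Second, ``diagonalize $\mtx{S}$ sample-wise and apply scalar Hoeffding to each eigenvalue'' is not by itself a valid step, since the eigenbasis is random and does not commute with the expectation; but the repair you then give is the right one: the chord bound $e^{\theta x}\le\cosh(\theta\Delta)+\sinh(\theta\Delta)\,x/\Delta$ on $[-\Delta,\Delta]$ transfers spectrally to $e^{\theta\mtx{S}}\preceq\cosh(\theta\Delta)\,\mtx{I}+\sinh(\theta\Delta)\,\mtx{S}/\Delta$, so $\E e^{\theta\mtx{S}}\preceq\cosh(\theta\Delta)\,\mtx{I}\preceq e^{\theta^2\Delta^2/2}\mtx{I}$ and hence $\log\E e^{\theta\mtx{S}}\preceq\tfrac{\theta^2\Delta^2}{2}\mtx{I}$ (this only needs monotonicity of $\log$ on the spectrum, not operator convexity). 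This route actually yields the sharper tail $2n\exp(-Lt^2/(2\Delta^2))$, which trivially implies \eqref{eq:hoeffding}; the constant $1/8$ in Tropp's own statement arises because his published proof passes through symmetrization and a Rademacher-series MGF bound, which costs a factor of $4$ in the variance. So your argument is correct and, if anything, proves a slightly stronger statement than the one quoted.
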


\subsection{Certificates}

We now establish sufficient conditions guaranteeing that
$\vct{x}^*\vct{x}$ is the unique feasible point of
\eqref{eq:feasible}. Variants of the lemma below have appeared before
in the literature, see
\cite{candes2012phaselift,demanet2012stable,candes2012solving}.
\begin{lemma}
\label{lem:ineact}
Suppose the mapping $\mathcal{A}$ obeys the following two
properties: 
\begin{enumerate}
\item For all matrices $\mtx{X}\in T$
\begin{equation}
\label{local2RIP}
\frac{1}{\sqrt{nL}} \twonorm{\cA(\mtx{X})} \ge \frac{(1-\delta)}{\sqrt{2}} \|\mtx{X}\|_F.
\end{equation}
\item There exists a self-adjoint matrix of the form
  $\mtx{Z}=\cA^*(\vct{\lambda})$, with $\vct{\lambda}$ real valued
  (this makes sure that $\mtx{Z}$ is self adjoint), obeying
\begin{equation}
\label{approxcert}
\mtx{Z}_{T^\perp}\preceq -\mtx{I}_{T^\perp}\quad\text{and}\quad\|\mtx{Z}_T\|_F\leq \frac{1-\delta}{2M^2 \sqrt{nL}}.
\end{equation}
\end{enumerate}
Then $\vct{x}^*\vct{x}$ is the unique element in the feasible set
\eqref{eq:feasible}.
\end{lemma}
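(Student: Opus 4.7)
The plan is the standard ``dual certificate plus restricted isometry'' argument adapted to the PSD feasibility formulation. Suppose for contradiction that $\vct{x}\vct{x}^* + \mtx{H}$ is another feasible point, where $\mtx{H}$ is a nonzero self-adjoint matrix satisfying $\cA(\mtx{H}) = 0$ and $\vct{x}\vct{x}^* + \mtx{H} \succeq \mtx{0}$. I will decompose $\mtx{H} = \mtx{H}_T + \mtx{H}_{T^\perp}$ and show $\mtx{H} = \mtx{0}$. The first observation is that the PSD constraint forces $\mtx{H}_{T^\perp} \succeq \mtx{0}$: conjugating $\vct{x}\vct{x}^* + \mtx{H} \succeq \mtx{0}$ by the projector $\mtx{I} - \vct{x}\vct{x}^*$ annihilates $\vct{x}\vct{x}^*$ and leaves precisely $\mtx{H}_{T^\perp}$. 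In particular, $\tr(\mtx{H}_{T^\perp}) = \|\mtx{H}_{T^\perp}\|_*$.

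Next I exploit the certificate. Since $\cA(\mtx{H}) = 0$ and $\mtx{Z} = \cA^*(\vct{\lambda})$ with $\vct{\lambda}$ real, $\langle \mtx{Z}, \mtx{H}\rangle = \langle \vct{\lambda}, \cA(\mtx{H})\rangle = 0$, which gives
\begin{equation*}
  \langle \mtx{Z}_{T^\perp}, \mtx{H}_{T^\perp}\rangle
  \;=\; -\langle \mtx{Z}_T, \mtx{H}_T\rangle.
\end{equation*}
Using $\mtx{Z}_{T^\perp} \preceq -\mtx{I}_{T^\perp}$ together with $\mtx{H}_{T^\perp} \succeq \mtx{0}$ on the left, and Cauchy--Schwarz plus the certificate bound on the right, this becomes
\begin{equation*}
  \tr(\mtx{H}_{T^\perp}) \;\le\; \|\mtx{Z}_T\|_F \,\|\mtx{H}_T\|_F
  \;\le\; \frac{1-\delta}{2M^2\sqrt{nL}}\,\|\mtx{H}_T\|_F.
\end{equation*}

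Now I turn to the null-space side. From $\cA(\mtx{H}) = 0$ and the triangle inequality,
\begin{equation*}
  \frac{1}{\sqrt{nL}}\twonorm{\cA(\mtx{H}_T)}
  \;\le\;\frac{1}{\sqrt{nL}}\twonorm{\cA(\mtx{H}_{T^\perp})}.
\end{equation*}
The lower bound in hypothesis (1) controls the left-hand side by $\tfrac{1-\delta}{\sqrt{2}}\|\mtx{H}_T\|_F$. For the right-hand side I use $\twonorm{\vct{v}} \le \onenorm{\vct{v}}$ coordinatewise followed by Lemma \ref{upperRIP1} applied to the PSD matrix $\mtx{H}_{T^\perp}$:
\begin{equation*}
  \frac{1}{\sqrt{nL}}\twonorm{\cA(\mtx{H}_{T^\perp})}
  \;\le\; \frac{1}{\sqrt{nL}}\onenorm{\cA(\mtx{H}_{T^\perp})}
  \;\le\; \sqrt{nL}\,M^2 \tr(\mtx{H}_{T^\perp}).
\end{equation*}
Chaining the two displays with the certificate bound on $\tr(\mtx{H}_{T^\perp})$ yields
\begin{equation*}
  \frac{1-\delta}{\sqrt{2}}\|\mtx{H}_T\|_F
  \;\le\; \sqrt{nL}\,M^2 \cdot \frac{1-\delta}{2M^2\sqrt{nL}}\,\|\mtx{H}_T\|_F
  \;=\; \frac{1-\delta}{2}\|\mtx{H}_T\|_F,
\end{equation*}
which is a contradiction unless $\|\mtx{H}_T\|_F = 0$, because $1/\sqrt{2} > 1/2$. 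Hence $\mtx{H}_T = \mtx{0}$, and then the certificate inequality gives $\tr(\mtx{H}_{T^\perp}) \le 0$; combined with $\mtx{H}_{T^\perp} \succeq \mtx{0}$ this forces $\mtx{H}_{T^\perp} = \mtx{0}$, so $\mtx{H} = \mtx{0}$.

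There is no real obstacle: the argument is purely algebraic once one sees how the ingredients line up. The only place calibration matters is the constant $\tfrac{1-\delta}{2M^2\sqrt{nL}}$ in hypothesis (2), which is chosen precisely so that the two estimates on $\twonorm{\cA(\mtx{H}_T)}$ and $\twonorm{\cA(\mtx{H}_{T^\perp})}$ collide with a strict gap $(1-\delta)(1/\sqrt{2} - 1/2) > 0$. The real work in the paper will be to produce the certificate $\mtx{Z}$ and to verify the $T$-restricted lower bound \eqref{local2RIP} with high probability for random CDP measurements.
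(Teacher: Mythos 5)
Your proof is correct and follows essentially the same route as the paper's: the PSD constraint forces $\mtx{H}_{T^\perp}\succeq\mtx{0}$, the certificate $\mtx{Z}$ bounds $\tr(\mtx{H}_{T^\perp})$ by $\|\mtx{Z}_T\|_F\|\mtx{H}_T\|_F$, and Lemma \ref{upperRIP1} together with the $\ell_1\ge\ell_2$ bound and the restricted injectivity \eqref{local2RIP} closes the loop via $\cA(\mtx{H}_T)=-\cA(\mtx{H}_{T^\perp})$. The constants line up exactly as in the paper, exploiting the gap $1/\sqrt{2}>1/2$.
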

\begin{proof}
  Suppose $\vct{x}\vct{x}^* + \mtx{H}$ is feasible. Feasibility
  implies that $\mtx{H}$ is a self-adjoint matrix in the null space of
  $\cA$ and $\mtx{H}_{T^\perp} \succeq \mtx{0}$. This is because for
  all $\vct{y} \perp \vct{x}$, 
\[
\vct{y}^*(\vct{x}\vct{x}^* + \mtx{H}) \vct{y} = \vct{y}^* \mtx{H}
\vct{y} \ge 0, 
\]
which says that $\mtx{H}_{T^\perp}$ is positive semidefinite. This
gives
\[
\<\mtx{H}, \mtx{Y}\> = 0 = \<\mtx{H}_T, \mtx{Z}_T\> +
\<\mtx{H}_{T^\perp}, \mtx{Z}_{T^\perp}\>.
\] 
On the one hand,
\begin{equation}
\label{Hequality}
\langle \mtx{H}_T, \mtx{Z}_T \rangle=-\langle \mtx{H}_{T^\perp},
\mtx{Z}_{T^\perp}\rangle\geq \langle \mtx{H}_{T^\perp},
\mtx{I}_{T^\perp}\rangle= \tr(\mtx{H}_{T^\perp}).
\end{equation}
Therefore,
\[
\tr(\mtx{H}_{T^\perp})\geq\frac{1}{M^2nL} \onenorm{\cA(\mtx{H}_{T^\perp})}\geq\frac{1}{M^2nL} \twonorm{\cA(\mtx{H}_{T^\perp})}.
\]
where the first inequality above follows from Lemma
\ref{upperRIP1}. The injectivity property \eqref{local2RIP} gives
 \[
 \frac{1}{\sqrt{nL}}\twonorm{\cA(\mtx{H}_{T})} \geq \frac{(1-\delta)}{\sqrt{2}} \|\mtx{H}_T\|_F
 \]
 and since $\cA(\mtx{H}_{T}) = -\cA(\mtx{H}_{T^\perp})$, we
 established
\begin{equation}
\label{eq:one}
\tr(\mtx{H}_{T^\perp})\geq \frac{1-\delta}{\sqrt{2nL}M^2} \|\mtx{H}_T\|_F.
\end{equation}
On the other hand,
\begin{equation}
\label{eq:two}
|\langle \mtx{H}_T, \mtx{Z}_T \rangle|\leq
\|\mtx{H}_T\|_F\|\mtx{Z}_T\|_F\leq \frac{1-\delta}{2M^2 \sqrt{nL}}\|\mtx{H}_{T}\|_F.
\end{equation}
In summary, \eqref{Hequality}, \eqref{eq:one} and \eqref{eq:two}
assert that $\mtx{H}_T = 0$. In turn, this gives
$\tr(\mtx{H}_{T^\perp}) = 0$ by \eqref{Hequality}, which implies that
$\mtx{H}_{T^\perp}=\mtx{0}$ since $\mtx{H}_{T^\perp} \succeq
\mtx{0}$. This completes the proof.
\end{proof}

Property \eqref{local2RIP} can be viewed as a form of robust
injectivity of the mapping $\mathcal{A}$ restricted to elements in
$T$.  It is of course reminiscent of the local restricted isometry
property in compressive sensing. Property \eqref{approxcert} can be
interpreted as the existence of an approximate dual certificate. It is
well known that injectivity together with an exact dual certificate
leads to exact reconstruction. The above lemma essentially asserts
that a robust form of injectivity together with an approximate dual
certificate leads to exact recovery as in \cite[Section
2.1]{candes2012phaselift}, see also \cite{gross2011recovering}.  In
the next two sections we show that the two properties stated in Lemma
\ref{lem:ineact} above each hold with probability at least $1-1/(2n)$.

\subsection{Robust injectivity}
\label{sec:robust}

\begin{lemma}
\label{lem:PL}
Fix $\delta > 0$ and suppose $L$ obeys $L \ge c \log^3 n$ for some
sufficiently large numerical constant $c$. Then with probability at
least $1-1/2n$, for all $\mtx{X}\in T$,
\[
\frac{1}{\sqrt{nL}} \twonorm{\cA(\mtx{X})} \ge \frac{(1-\delta)}{\sqrt{2}}
\|\mtx{X}\|_F.
\]
\end{lemma}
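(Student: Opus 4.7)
The plan is to establish concentration of the self-adjoint operator $\mathcal{T} := \mathcal{P}_T\bigl(\tfrac{1}{nL}\mathcal{A}^*\mathcal{A}\bigr)\mathcal{P}_T$, viewed as acting on the subspace $T$, around its expectation; since $\tfrac{1}{nL}\twonorm{\mathcal{A}(\mtx{X})}^2 = \langle\mathcal{T}(\mtx{X}),\mtx{X}\rangle$ for $\mtx{X}\in T$, an operator-norm bound on $\mathcal{T} - \E\mathcal{T}$ yields the lemma uniformly in $\mtx{X}$. To compute the expectation, I would invoke Lemma~\ref{teo:expectation1}, which gives $\E\bigl[\tfrac{1}{nL}\mathcal{A}^*\mathcal{A}(\vct{u}\vct{u}^*)\bigr] = \vct{u}\vct{u}^* + \|\vct{u}\|^2\mtx{I}$; extending by linearity through the spectral decomposition, $\E\bigl[\tfrac{1}{nL}\mathcal{A}^*\mathcal{A}(\mtx{Y})\bigr] = \mtx{Y} + \tr(\mtx{Y})\mtx{I}$ for every Hermitian $\mtx{Y}$. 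Since $\mathcal{P}_{T^\perp}(\mtx{I}) = \mtx{I}-\vct{x}\vct{x}^*$, we have $\mathcal{P}_T(\mtx{I}) = \vct{x}\vct{x}^*$, so $\E\mathcal{T}(\mtx{X}) = \mtx{X} + \tr(\mtx{X})\vct{x}\vct{x}^*$ for $\mtx{X}\in T$, which is positive semidefinite on $T$ with smallest eigenvalue at least $1$. Thus $\E\tfrac{1}{nL}\twonorm{\mathcal{A}(\mtx{X})}^2 = \|\mtx{X}\|_F^2 + \tr(\mtx{X})^2 \ge \|\mtx{X}\|_F^2$, a factor of $2$ stronger than the lemma demands.

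Next, write $\tfrac{1}{nL}\mathcal{A}^*\mathcal{A} = \tfrac{1}{L}\sum_\ell \mathcal{N}_\ell$ with $\mathcal{N}_\ell(\mtx{Y}) := \tfrac{1}{n}\sum_k(\vct{f}_k^*\mtx{D}_\ell^*\mtx{Y}\mtx{D}_\ell\vct{f}_k)\mtx{D}_\ell\vct{f}_k\vct{f}_k^*\mtx{D}_\ell^*$, so that the increments $\mathcal{S}_\ell := \mathcal{P}_T(\mathcal{N}_\ell - \E\mathcal{N})\mathcal{P}_T$ are i.i.d.~mean-zero self-adjoint operators on $T$, a real inner-product space of dimension at most $2n$. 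Applying matrix Hoeffding (Lemma~\ref{mathoef}) with $t$ a small constant bounds the failure probability by $4n\exp(-cL/\Delta^2)$, where $\Delta$ is an almost-sure bound on $\|\mathcal{S}_\ell\|_{\mathrm{op}}$. The hypothesis $L\ge c\log^3 n$ then suffices provided $\Delta = O(\log n)$.

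The main obstacle is establishing this $O(\log n)$ operator-norm bound, which relies essentially on restricting $\mathcal{N}_\ell$ to $T$ on both sides (without the restriction, only $O(n)$ bounds are available, since a single rank-one term in the sum defining $\mathcal{N}_\ell$ has Frobenius norm of order $n$). For $\mtx{X},\mtx{Y}\in T$, write $\mtx{X} = \vct{x}\vct{u}^* + \vct{u}\vct{x}^*$ and $\mtx{Y} = \vct{x}\vct{v}^* + \vct{v}\vct{x}^*$; using the gauge freedom $\vct{u}\mapsto\vct{u}+i\lambda\vct{x}$ to impose $\langle\vct{x},\vct{u}\rangle\in\mathbb{R}$ (and similarly for $\vct{v}$), the normalization $\|\mtx{X}\|_F = \|\mtx{Y}\|_F = 1$ forces $\|\vct{u}\|,\|\vct{v}\|\le 1/\sqrt{2}$. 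Setting $\alpha_k := \vct{f}_k^*\mtx{D}^*\vct{x}$, $\mu_k := \vct{f}_k^*\mtx{D}^*\vct{u}$, $\nu_k := \vct{f}_k^*\mtx{D}^*\vct{v}$, so that $\vct{f}_k^*\mtx{D}^*\mtx{X}\mtx{D}\vct{f}_k = 2\Re(\alpha_k\overline{\mu_k})$ (and similarly for $\mtx{Y}$), Cauchy--Schwarz combined with $\sum_k|\mu_k|^2 = n\|\mtx{D}^*\vct{u}\|^2 \le nM^2\|\vct{u}\|^2$ gives
\[
\bigl|\langle\mathcal{N}_\ell(\mtx{X}),\mtx{Y}\rangle\bigr| \le \tfrac{4}{n}\max_k|\alpha_k|^2\sqrt{\sum_k|\mu_k|^2}\sqrt{\sum_k|\nu_k|^2} \le 2M^2\max_k|\alpha_k|^2.
\]
Since $\alpha_k$ is a sum of $n$ independent, mean-zero, bounded complex random variables of total variance $\|\vct{x}\|^2 = 1$, a complex Hoeffding inequality together with a union bound over $k$ yields $\max_k|\alpha_k|\le CM\sqrt{\log n}$ with probability at least $1-1/n^3$, hence $\|\mathcal{P}_T\mathcal{N}_\ell\mathcal{P}_T\|_{\mathrm{op}}\le C'\log n$ on this event. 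A standard truncation (replace $\mathcal{N}_\ell$ by its restriction to the good event; the drift in $\E\mathcal{N}$ is controlled using the deterministic bound $\|\mathcal{N}_\ell\|_{\mathrm{op}}\le M^4 n^2$ against the $O(1/n^3)$ bad-event probability) converts this into the almost-sure bound required by Lemma~\ref{mathoef}, completing the proof.
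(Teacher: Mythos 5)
Your proposal is correct, and the probabilistic engine is the same one the paper uses: truncate on the event $|\vct{f}_k^*\mtx{D}^*\vct{x}|\le O(\sqrt{\log n})$, apply matrix Hoeffding (Lemma~\ref{mathoef}) with deviation parameter $\Delta = O(M^2\log n)$ (which is exactly where $L\gtrsim\log^3 n$ comes from), and control the truncation bias by playing the crude deterministic bound $O(M^4n^2)$ against the $n^{-\beta}$ tail probability. Where you genuinely diverge is in the linear-algebraic packaging. The paper parametrizes $\mtx{X}=\vct{x}\vct{y}^*+\vct{y}\vct{x}^*\in T$ by $\vct{y}$ and rewrites each squared measurement as a quadratic form in $[\vct{y};\bar{\vct{y}}]\in\C^{2n}$ via an explicit $2n\times 2n$ PSD matrix $\mtx{W}_k(\mtx{D})$; bounding $\|\mtx{W}_k\|$ then needs a block-PSD domination trick, and computing $\E\mtx{W}_k$ needs both Lemma~\ref{teo:expectation1} and the bilinear identity of Lemma~\ref{teo:expectation2} for the off-diagonal blocks. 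You instead compress $\tfrac{1}{nL}\cA^*\cA$ to $T$ on both sides and bound the resulting bilinear form directly by Cauchy--Schwarz. This buys you two simplifications: the expectation $\mtx{Y}\mapsto\mtx{Y}+\tr(\mtx{Y})\mtx{I}$ follows from Lemma~\ref{teo:expectation1} alone by linearity over the spectral decomposition, so Lemma~\ref{teo:expectation2} is never needed; and the lower bound $\E\mathcal{T}\succeq \mathcal{I}$ on $T$ is immediate since the extra term $\mtx{X}\mapsto\tr(\mtx{X})\vct{x}\vct{x}^*$ is a rank-one PSD perturbation, whereas the paper must argue separately that the eigendirection $[\vct{x};-\bar{\vct{x}}]$ of $\E\mtx{W}$ is orthogonal to every admissible $[\vct{y};\bar{\vct{y}}]$. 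The small price is that Hoeffding must be applied to self-adjoint operators on the real $(2n-1)$-dimensional space $T$ rather than to honest matrices, which is a routine basis choice away from Lemma~\ref{mathoef}. The individual estimates you quote all check out: after gauge-fixing, $\fronorm{\mtx{X}}^2=2\twonorm{\vct{u}}^2+2(\vct{x}^*\vct{u})^2$ gives $\twonorm{\vct{u}}\le 1/\sqrt{2}$; $\sum_k|\mu_k|^2=n\twonorm{\mtx{D}^*\vct{u}}^2\le nM^2\twonorm{\vct{u}}^2$ from $\mtx{F}^*\mtx{F}=n\mtx{I}$; and the factor-two slack between $\E\tfrac{1}{nL}\twonorm{\cA(\mtx{X})}^2\ge\fronorm{\mtx{X}}^2$ and the target $\tfrac{(1-\delta)^2}{2}\fronorm{\mtx{X}}^2$ lets a constant-level deviation close the argument.
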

\begin{proof}
  First, notice that without loss of generality we can assume that
  $\vct{x}^*\vct{y}$ is real valued in the definition of $T$. That is,
\[
T=\{\mtx{X} = \vct{x}\vct{y}^*+\vct{y}\vct{x}^* \, : y \in \C^n \text{
  and } \vct{x}^*\vct{y} \in \R\}.
\]
The reason why this is true is that for any $\vct{y} \in \C^n$, we can
find $\lambda \in \mathbb{R}$, such that
$\vct{x}^*\vct{y}-i\lambda\vct{x}^*\vct{x}=\vct{x}^*(\vct{y}-i\lambda\vct{x})\in
\mathbb{R}$ while
\[
\vct{x}(\vct{y}-i\lambda\vct{x})^*+(\vct{y}-i\lambda\vct{x})\vct{x}^*=\vct{x}\vct{y}^*+\vct{y}\vct{x}^*,
\]

Now for any $\mtx{X} = \x\y^* + \y\x^* \in T$, 
\begin{align}
\label{twofroineq}
\fronorm{\mtx{X}} = \fronorm{\x\y^* + \y\x^*} \le \fronorm{\x\y^*} +
\fronorm{\y\x^*} \le 2 \twonorm{\x} \twonorm{\y} = 2 \twonorm{\y},
\end{align}
where we recall that $\twonorm{\x} = 1$. Hence, it suffices to show that 
\begin{equation}
  \label{eq:toshow}
  \frac{1}{\sqrt{nL}}  \twonorm{\mathcal{A}(\vct{x}\vct{y}^*+\vct{y}\vct{x}^*)} \ge \frac{(1-\delta)}{\sqrt{2}}
  \twonorm{\y}.
\end{equation}
We have
\[
\twonorm{\mathcal{A}(\vct{x}\vct{y}^*+\vct{y}\vct{x}^*)}^2 =
\sum_{\ell=1}^L\sum_{k=1}^n{\bigg(\vct{f}_k^*\mtx{D}_\ell^*(\vct{x}\vct{y}^*+\vct{y}\vct{x}^*)\mtx{D}_\ell\vct{f}_k\bigg)}^2. 
\]
(The reader might have expected a sum of squared moduli but since
$\vct{x}\vct{y}^*+\vct{y}\vct{x}^*$ is self adjoint,
$\vct{f}_k^*\mtx{D}_\ell^*(\vct{x}\vct{y}^*+
\vct{y}\vct{x}^*)\mtx{D}_\ell\vct{f}_k$ is real valued and so we can
just as well use squares.
For exposition purposes, set 
\[
\mtx{A}_k(\mtx{D})=\abs{\vct{f}_k^*\mtx{D}^*\vct{x}}^2\vct{f}_k\vct{f}_k^*,
\quad
\mtx{B}_k(\mtx{D})={(\vct{f}_k^*\mtx{D}^*\vct{x})}^2\vct{f}_k\vct{f}_k^T.
\]
A simple computation we omit yields
\begin{align*}
\bigg(\vct{f}_k^*\mtx{D}^*\vct{x}\vct{y}^*\mtx{D}\vct{f}_k+\vct{f}_k^*\mtx{D}^*\vct{y}\vct{x}^*\mtx{D}\vct{f}_k\bigg)^2
& = \begin{bmatrix} \vct{y} \\ \bar{\vct{y}} \end{bmatrix}^*
\begin{bmatrix} \mtx{D} & \mtx{0} \\ \mtx{0} & \mtx{D}^* \end{bmatrix}
\begin{bmatrix} 
\mtx{A}_k(\mtx{D})
& 
\mtx{B}_k(\mtx{D})
\\ 
\overline{\mtx{B}_k(\mtx{D})}
& 
\overline{\mtx{A}_k(\mtx{D})}
\end{bmatrix} 
\begin{bmatrix} \mtx{D}^* & \mtx{0} \\ \mtx{0} & \mtx{D} \end{bmatrix}
\begin{bmatrix} \vct{y} \\ \bar{\vct{y}} \end{bmatrix}\\
& = \begin{bmatrix} \vct{y} \\ \bar{\vct{y}} \end{bmatrix}^* \mtx{W}_k(\mtx{D}) \begin{bmatrix} \vct{y} \\ \bar{\vct{y}} \end{bmatrix}, 
\end{align*}
where 
\[
\mtx{W}_k(\mtx{D}) := \begin{bmatrix} \mtx{D} & \mtx{0} \\ \mtx{0} & \mtx{D}^* \end{bmatrix}
\begin{bmatrix} 
\mtx{A}_k(\mtx{D})
& 
\mtx{B}_k(\mtx{D})
\\ 
\overline{\mtx{B}_k(\mtx{D})}
& 
\overline{\mtx{A}_k(\mtx{D})}
\end{bmatrix} 
\begin{bmatrix} \mtx{D}^* & \mtx{0} \\ \mtx{0} & \mtx{D} \end{bmatrix}. 
\]
Fix a positive threshold $T_n$. We now claim that \eqref{eq:toshow}
follows from
\begin{equation}
\label{eq:toshow2}
\frac{1}{nL} \sum_{\ell = 1}^L \sum_{k = 1}^n \mtx{W}_k(\mtx{D}_\ell)
1(\left|\vct{f}_k^*\mtx{D}_\ell^*\vct{x}\right|\leq T_n)
\succeq \alpha \begin{bmatrix} \vct{x} \\
  -\overline{\vct{x}} \end{bmatrix}\begin{bmatrix} \vct{x} \\
  -\overline{\vct{x}} \end{bmatrix}^*+{(1-\delta)}^2\mtx{I}_{2n}
\end{equation}
in which $\alpha$ is any real valued number. To see why this is true,
observe that 
\begin{align*}
  \frac{1}{nL}
  \twonorm{\mathcal{A}(\vct{x}\vct{y}^*+\vct{y}\vct{x}^*)}^2 &
  \ge \begin{bmatrix} \vct{y} \\ \bar{\vct{y}} \end{bmatrix}^*
  \frac{1}{nL} \sum_\ell \sum_k \mtx{W}_k(\mtx{D}_\ell)
  1(\left|\vct{f}_k^*\mtx{D}_\ell^*\vct{x}\right|\leq
  T_n) \begin{bmatrix} \vct{y} \\ \bar{\vct{y}} \end{bmatrix} \\
  & \ge (1-\delta)^2 \begin{bmatrix} \vct{y} \\
    \bar{\vct{y}} \end{bmatrix}^* \mtx{I}_{2n} \begin{bmatrix} \vct{y}
    \\ \bar{\vct{y}} \end{bmatrix} = 2(1-\delta)^2 \twonorm{\y}^2.
\end{align*}
The last inequality comes from \eqref{eq:toshow2} together with 
\[
\begin{bmatrix} \vct{x} \\ -\overline{\vct{x}} \end{bmatrix}^*\begin{bmatrix} \vct{y} \\ \overline{\vct{y}} \end{bmatrix}=0, 
\]
which holds since we assumed that $\x^* \y$ is real valued.

The remainder of the proof justifies \eqref{eq:toshow2} by means of
the matrix Hoeffding inequality. Let $\avg{\mtx{W}}$ be the left-hand
side in \eqref{eq:toshow2} (we use notation from physics to denote
empirical averages since a bar denotes complex conjugation and we
would like to avoid overloading symbols). By definition
$\avg{\mtx{W}}$ is the empirical average of $L$ i.i.d.~copies of
\[
\mtx{W}(\mtx{D}) = \frac{1}{n} \sum_{k=1}^n \mtx{W}_k(\mtx{D})
\mathbb{1}_{\{\left|\vct{f}_k^*\mtx{D}^*\vct{x}\right|\leq T_n\}}.
\]
First, $\mtx{W}_k(\mtx{D}) \succeq \mtx{0}$ since 
\[
\begin{bmatrix} 
\mtx{A}_k(\mtx{D})
& 
\mtx{B}_k(\mtx{D})
\\ 
\overline{\mtx{B}_k(\mtx{D})}
& 
\overline{\mtx{A}_k(\mtx{D})}
\end{bmatrix}
= \begin{bmatrix} (\vct{f}_k^*\mtx{D}^*\vct{x})\vct{f}_k \\ (\overline{\vct{f}_k^*\mtx{D}^*\vct{x}})\overline{\vct{f}_k} \end{bmatrix}\begin{bmatrix} (\vct{f}_k^*\mtx{D}^*\vct{x})\vct{f}_k \\ (\overline{\vct{f}_k^*\mtx{D}^*\vct{x}})\overline{\vct{f}_k} \end{bmatrix}^*. 
\]
Further, 
\begin{align*}
\begin{bmatrix} 
\mtx{A}_k(\mtx{D})
& 
\mtx{B}_k(\mtx{D})
\\ 
\overline{\mtx{B}_k(\mtx{D})}
& 
\overline{\mtx{A}_k(\mtx{D})}
\end{bmatrix} & = \begin{bmatrix} 
2\mtx{A}_k(\mtx{D})
& 
\mtx{0}
\\ 
\mtx{0} 
& 
2\overline{\mtx{A}_k(\mtx{D})}
\end{bmatrix} - 
\begin{bmatrix} 
\mtx{A}_k(\mtx{D})
& 
-\mtx{B}_k(\mtx{D})
\\ 
-\overline{\mtx{B}_k(\mtx{D})}
& 
\overline{\mtx{A}_k(\mtx{D})}
\end{bmatrix}\\
& \preceq \begin{bmatrix} 
2\mtx{A}_k(\mtx{D})
& 
\mtx{0}
\\ 
\mtx{0} 
& 
2\overline{\mtx{A}_k(\mtx{D})}
\end{bmatrix}. 
\end{align*}
The inequality comes from 
\[
\begin{bmatrix} 
\mtx{A}_k(\mtx{D})
& 
-\mtx{B}_k(\mtx{D})
\\ 
-\overline{\mtx{B}_k(\mtx{D})}
& 
\overline{\mtx{A}_k(\mtx{D})}
\end{bmatrix} = \begin{bmatrix} (\vct{f}_k^*\mtx{D}^*\vct{x})\vct{f}_k
\\
-(\overline{\vct{f}_k^*\mtx{D}^*\vct{x}})\overline{\vct{f}_k} \end{bmatrix}\begin{bmatrix}
(\vct{f}_k^*\mtx{D}^*\vct{x})\vct{f}_k \\
-(\overline{\vct{f}_k^*\mtx{D}^*\vct{x}})\overline{\vct{f}_k} \end{bmatrix}^*
\succeq \mtx{0}.
\]
Hence,
\begin{align*}
\sum_{k=1}^n
\begin{bmatrix} 
\mtx{A}_k(\mtx{D})
& 
\mtx{B}_k(\mtx{D})
\\ 
\overline{\mtx{A}_k(\mtx{D})}
& 
\overline{\mtx{B}_k(\mtx{D})}
\end{bmatrix}
\mathbb{1}_{\{\left|\vct{f}_k^*\mtx{D}^*\vct{x}\right|\leq T_n\}} &\preceq 2
\sum_{k=1}^n \begin{bmatrix}
  \abs{f_k^*\mtx{D}^*\vct{x}}^2\vct{f}_k\vct{f}_k^* & \mtx{0}
  \\
  \mtx{0} &
  \abs{f_k^*\mtx{D}^*\vct{x}}^2\overline{\vct{f}_k}\vct{f}_k^T
\end{bmatrix}
\mathbb{1}_{\{\left|\vct{f}_k^*\mtx{D}^*\vct{x}\right|\leq T_n\}}
\\
&\preceq 2T_n^2 \sum_{k=1}^n
\begin{bmatrix} 
\vct{f}_k\vct{f}_k^* 
& 
\mtx{0}
\\ 
\mtx{0}
& 
\overline{\vct{f}_k}\vct{f}_k^T
\end{bmatrix}\\
& 
= 2n T_n^2 \mtx{I}_{2n}.
\end{align*}
In summary, 
\[
\|\mtx{W}(\mtx{D})\|\leq 2 T_n^2 \|\mtx{D}\|^2 \leq 2M^2 T_n^2. 
\]

We now roughly estimate the mean of $\mtx{W}(\mtx{D})$. Obviously,
\begin{align*}
\W(\mtx{D}) &= \frac{1}{n} \sum_k \W_k(\mtx{D}) - \frac{1}{n} \sum_k \W_k(\mtx{D})
\mathbb{1}_{\{\left|\vct{f}_k^*\mtx{D}^*\vct{x}\right|\leq T_n\}}\\
&:=\tilde{\W}(\mtx{D})-\frac{1}{n} \sum_k \W_k(\mtx{D})
\mathbb{1}_{\{\left|\vct{f}_k^*\mtx{D}^*\vct{x}\right|\leq T_n\}}.
\end{align*}
By Lemmas \ref{teo:expectation1} and \ref{teo:expectation2}, the mean
of the first term ($\tilde{\W}(\mtx{D})$) is equal to
\begin{align}
\label{meanfirst}
\E \tilde{\W}(\mtx{D})=\mtx{I}_{2n} + \begin{bmatrix} \vct{x}\vct{x}^* &   2\vct{x}\vct{x}^T \\  2\bar{\vct{x}}\vct{x}^* &  \overline{\vct{x}}\vct{x}^T  \end{bmatrix}.
\end{align}
Furthermore, a simple calculation shows that since 
\[
|\vct{f}_k^*\mtx{D}^*\vct{x}| \le \twonorm{\vct{f}_k}
\twonorm{\mtx{D}^*\vct{x}} \le \twonorm{\vct{f}_k} \|\mtx{D}\| \le
\sqrt{n} \|\mtx{D}\|,
\]
one can verify that 
\[
\|\W_k(\mtx{D})\| \le 4n^2 \|\mtx{D}\|^4 \le 4 M^4 n^2. 
\]
Therefore, Jensen's inequality gives
\begin{align*}
  \left\|\E\W(\mtx{D})-\E\tilde{\W}(\mtx{D})\right\| & = \left\| \E \frac{1}{n} \sum_k \W_k(\mtx{D})
    \mathbb{1}_{\{\left|\vct{f}_k^*\mtx{D}^*\vct{x}\right|\leq T_n\}} \right\| \\ & \leq
  \E \left\| \frac{1}{n} \sum_k \W_k(\mtx{D})
    \mathbb{1}_{\{\left|\vct{f}_k^*\mtx{D}^*\vct{x}\right|\leq T_n\}} \right\| \\ & \leq
  4M^4n\sum_{k=1}^n\P(\left|\vct{f}_k^*\mtx{D}^*\vct{x}\right|>T_n). 
\end{align*}
Setting $T_n = \sqrt{2\beta \log n}$, then a simple application of
Hoeffding's inequality gives 
\[
\P(\left|\vct{f}_k^*\mtx{D}^*\vct{x}\right|> \sqrt{2\beta \log n}) \leq
2n^{-\beta}
\]
(we omit the details). Therefore,
\begin{align}
\label{jensen}
\left\|\E\W(\mtx{D})-\E\tilde{\W}(\mtx{D})\right\|\le \frac{8M^4}{n^{\beta-2}}. 
\end{align}
Next,
\[
\|\mtx{W}(\mtx{D})-\E \mtx{W}(\mtx{D})\| \leq \|\mtx{W}(\mtx{D})\| + \|\E\mtx{W}(\mtx{D})\|\leq \|\mtx{W}(\mtx{D})\| + \|\E\tilde{\mtx{W}}(D)\|+\|\E\W(\mtx{D})-\E\tilde{\W}(\mtx{D})\|
\]
and with $T_n$ as above, collecting our estimates gives
\[
\|\mtx{W}(\mtx{D})-\E \mtx{W}(\mtx{D})\| \le 4M^2 \beta \log n + 4 +
\frac{8M^4}{n^{\beta-2}} :=\Delta.
\]

We have done the groundwork to apply the matrix Hoeffding inequality
\eqref{eq:hoeffding}, which reads
\[
\P\left(\left\|\avg{\mtx{W}} - \E\mtx{W}(\mtx{D})\right\|\geq t\right)\leq
2n\exp\Bigl(-\frac{L t^2}{8\Delta^2}\Bigr).
\]
This implies that when $\beta$ is sufficiently large and $L\ge c
\log^3 n$ for a sufficiently large constant,
\[
\|\avg{\mtx{W}} -\E\mtx{W}(\mtx{D}) \|\leq {\epsilon}/{2}
\]
with probability at least $1-1/(2n)$.  Now from \eqref{meanfirst}, and
\eqref{jensen} gives
\[
\E\mtx{W}(\mtx{D}) = \mtx{I}_{2n}+\frac{3}{2}\begin{bmatrix}
      \vct{x} \\ \bar{\vct{x}} \end{bmatrix} [\vct{x}^*,
    \vct{x}^T]-\frac{1}{2}\begin{bmatrix} \vct{x} \\
      -\bar{\vct{x}} \end{bmatrix} [\vct{x}^*,
    -\vct{x}^T] + \mtx{E}, 
\]
where \eqref{jensen} gives that $\|\mtx{E}\| \le \epsilon/2$ provided
$\beta\ge 2 + \log (16 M^4 \epsilon^{-1})/\log n$.  Hence, we have
established that
\[
\avg{\W} \succeq (1-\epsilon) \mtx{I}_{2n} +
\frac{3}{2}\begin{bmatrix} \vct{x} \\ \bar{\vct{x}} \end{bmatrix}
[\vct{x}^*,
\vct{x}^T]-\frac{1}{2}\begin{bmatrix} \vct{x} \\
  -\bar{\vct{x}} \end{bmatrix} [\vct{x}^*, -\vct{x}^T] \succeq
(1-\epsilon) \mtx{I}_{2n} - \frac{1}{2}\begin{bmatrix} \vct{x} \\
  -\bar{\vct{x}} \end{bmatrix} [\vct{x}^*, -\vct{x}^T]
\]
since $\begin{bmatrix} \vct{x} \\ \bar{\vct{x}} \end{bmatrix}
[\vct{x}^*, \vct{x}^T] \succeq \mtx{0}$.  With
$\epsilon=2\delta-\delta^2$, this is the desired conclusion
\eqref{eq:toshow2}.

\end{proof}

\subsection{Dual certificate construction via the golfing scheme}

%

We now construct the approximate dual certificate $\mtx{Z}$ obeying
the conditions of Lemma \ref{lem:ineact}. For this purpose we use the
golfing scheme first presented in the work of Gross
\cite{gross2011recovering}. Modifications of this technique have
subsequently been used in many other papers
e.g.~\cite{candes2011robust, candes2012phaselift, li2012sparse}. The
special form used here is most closely related to the construction in
\cite{li2012sparse}. The mathematical validity of our construction
crucially relies on the lemma below, whose proof is the object of the
separate Section \ref{fixedmat}.
\begin{lemma}
\label{fixedmatlem}
Assume that $L\ge c\log^3n$ for a sufficiently large constant $c$.
Then for any fixed $\mtx{X} \in T$, there exists $\mtx{Y}$ of the form
$\mtx{Y} = \mathcal{A}^*(\vct{\lambda})$ with $\vct{\lambda}$ real
valued such that
\[
\|\mtx{Y}-\mtx{X}\|\leq \frac{\sqrt{2}}{20}\|\mtx{X}\|_F
\]
holds with probability at least $1-1/n^2$. This inequality has the
immediate consequences
\[
\fronorm{\mtx{Y}_T-\mtx{X}}\le\frac{1}{5}\fronorm{\mtx{X}}, \quad
\|\mtx{Y}_{T^\perp}\|\le \frac{\sqrt{2}}{20}\fronorm{\mtx{X}}.
\]
\end{lemma}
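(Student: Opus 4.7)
My plan is to construct $\mtx{Y}$ by a variant of Gross's golfing scheme, adapted to the expectation structure of $\mathcal{A}^*\mathcal{A}$ revealed by Lemmas \ref{teo:expectation1} and \ref{identityinrange}. Split the $L$ masks into $K = \lceil c_0 \log n\rceil$ independent batches, each of size $L_0 \ge c_1 \log^2 n$, so that $L = KL_0 \ge c \log^3 n$. Let $\mathcal{A}_i$ denote the sensing map associated with the $i$-th batch, and define the debiased operator
\[
\mathcal{F}_i(\mtx{Z}) := \frac{1}{nL_0}\mathcal{A}_i^*\mathcal{A}_i(\mtx{Z}) - \tr(\mtx{Z})\cdot \frac{1}{nL_0}\mathcal{A}_i^*(\vct{1}).
\]
Two observations are crucial. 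First, $\mathcal{F}_i(\mtx{Z}) = \mathcal{A}_i^*(\vct{\lambda}_i)$ with $\vct{\lambda}_i = (nL_0)^{-1}[\mathcal{A}_i(\mtx{Z}) - \tr(\mtx{Z})\vct{1}]$ real-valued whenever $\mtx{Z}$ is self-adjoint, so $\mathcal{F}_i(\mtx{Z})$ lies in the range of $\mathcal{A}^*$ with a real coefficient vector. Second, by linearity and Lemma \ref{teo:expectation1}, $\E\, (nL_0)^{-1}\mathcal{A}_i^*\mathcal{A}_i(\mtx{Z}) = \mtx{Z} + \tr(\mtx{Z})\mtx{I}$ for every Hermitian $\mtx{Z}$, while $\E\,(nL_0)^{-1}\mathcal{A}_i^*(\vct{1}) = \mtx{I}$ (compare Lemma \ref{identityinrange}), so that $\E\mathcal{F}_i(\mtx{Z}) = \mtx{Z}$.

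I would then iterate to construct the certificate. Set $\mtx{X}_0 := \mtx{X}$, $\mtx{Y}_0 := \mtx{0}$, and for $i = 1,\ldots, K$,
\[
\mtx{Y}_i := \mtx{Y}_{i-1} + \mathcal{F}_i(\mtx{X}_{i-1}), \qquad \mtx{X}_i := \mtx{X} - \mathcal{P}_T(\mtx{Y}_i) = \mtx{X}_{i-1} - \mathcal{P}_T\mathcal{F}_i(\mtx{X}_{i-1}),
\]
and output $\mtx{Y} := \mtx{Y}_K$. The technical engine is a one-shot concentration bound: for any fixed $\mtx{W} \in T$ independent of the $i$-th batch, with probability at least $1-1/n^3$,
\[
\|\mathcal{P}_T\mathcal{F}_i(\mtx{W}) - \mtx{W}\|_F \le \tfrac{1}{2}\fronorm{\mtx{W}}\quad\text{and}\quad \|\mathcal{P}_{T^\perp}\mathcal{F}_i(\mtx{W})\| \le \tfrac{1}{40}\fronorm{\mtx{W}}.
\]
Since elements of $T$ have rank at most two, the Frobenius bound on $T$ is equivalent up to a factor of $\sqrt{2}$ to an operator-norm bound, so both bounds follow from a single operator-norm bound on $\mathcal{F}_i(\mtx{W}) - \mtx{W}$. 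That bound is obtained by writing $(nL_0)^{-1}\mathcal{A}_i^*\mathcal{A}_i(\mtx{W})$ as an $L_0$-term i.i.d.\ average and applying the matrix Hoeffding inequality (Lemma \ref{mathoef}). Exactly as in the proof of Lemma \ref{lem:PL}, one controls the norm of each individual summand by truncating on the event $|\vct{f}_k^*\mtx{D}_\ell^*\vct{x}| \le \sqrt{2\beta\log n}$ (and the analogous event for $\vct{y}$), which occurs except on a polynomially small bad event by scalar Hoeffding; after truncation each summand has norm $O(\fronorm{\mtx{W}}\log n)$, and $L_0 \gtrsim \log^2 n$ suffices for Lemma \ref{mathoef} to deliver the desired constant-size concentration. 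The contribution of $\tr(\mtx{Z})\cdot (nL_0)^{-1}\mathcal{A}_i^*(\vct{1})$ is dealt with in the same way using Lemma \ref{identityinrange}.

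Conditional on the $K$ concentration events occurring simultaneously (which happens with probability at least $1-1/n^2$ by a union bound), induction gives $\fronorm{\mtx{X}_i} \le 2^{-i}\fronorm{\mtx{X}}$, so $\fronorm{\mtx{X}_K} \le n^{-c_0}\fronorm{\mtx{X}}$. Because $\mtx{X} \in T$, $\mtx{Y} - \mtx{X} = -\mtx{X}_K + \mathcal{P}_{T^\perp}(\mtx{Y}_K)$, and
\[
\|\mtx{Y} - \mtx{X}\| \le \fronorm{\mtx{X}_K} + \sum_{i=1}^K \|\mathcal{P}_{T^\perp}\mathcal{F}_i(\mtx{X}_{i-1})\| \le n^{-c_0}\fronorm{\mtx{X}} + \tfrac{1}{40}\sum_{i=1}^K 2^{-(i-1)}\fronorm{\mtx{X}} \le \tfrac{\sqrt{2}}{20}\fronorm{\mtx{X}},
\]
provided $c_0$ is large enough. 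The two consequences then fall out: $\fronorm{\mtx{Y}_T - \mtx{X}} \le \sqrt{2}\|\mtx{Y}_T - \mtx{X}\| \le \sqrt{2}\|\mtx{Y}-\mtx{X}\| \le \tfrac{1}{10}\fronorm{\mtx{X}} \le \tfrac{1}{5}\fronorm{\mtx{X}}$ because $\mtx{Y}_T - \mtx{X} \in T$ has rank at most two, and $\|\mtx{Y}_{T^\perp}\| = \|(\mtx{Y} - \mtx{X})_{T^\perp}\| \le \|\mtx{Y}-\mtx{X}\| \le \tfrac{\sqrt{2}}{20}\fronorm{\mtx{X}}$.

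\textbf{Main obstacle.} The delicate step is the per-iteration operator-norm concentration of $\mathcal{F}_i(\mtx{W})$ around $\mtx{W}$. The summands $(\vct{f}_k^*\mtx{D}^*\mtx{W}\mtx{D}\vct{f}_k)\mtx{D}\vct{f}_k\vct{f}_k^*\mtx{D}^*$ have \emph{typical} size $O(\fronorm{\mtx{W}})$ but \emph{worst-case} size polynomial in $n$, so a naive application of matrix Hoeffding would require $L_0 \gg \text{poly}(n)$ rather than polylogarithmically many masks. The truncation strategy at level $\sqrt{2\beta\log n}$ employed in Lemma \ref{lem:PL} is what allows the uniform bound on summands to be $O(\fronorm{\mtx{W}}\log n)$, and the main verification burden is to show that this truncation introduces only a $n^{-\Omega(\beta)}$ perturbation to the mean of $\mathcal{F}_i(\mtx{W})$, so that unbiasedness $\E\mathcal{F}_i(\mtx{W}) \approx \mtx{W}$ survives intact.
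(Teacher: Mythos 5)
Your debiased operator $\mathcal{F}_i$ and the identity $\E\,\mathcal{F}_i(\mtx{Z})=\mtx{Z}$ are exactly right, but there is a genuine gap in the parameter accounting for your inner golfing loop. After truncating at $T_n=\sqrt{2\beta\log n}$, each i.i.d.\ summand of $(nL_0)^{-1}\mathcal{A}_i^*\mathcal{A}_i(\mtx{W})$ has operator norm of order $M^2(\log n)\fronorm{\mtx{W}}$, so in Lemma \ref{mathoef} you must take $\Delta\asymp (\log n)\fronorm{\mtx{W}}$. To make $2n\exp\bigl(-L_0t^2/(8\Delta^2)\bigr)\le n^{-3}$ with $t=\epsilon\fronorm{\mtx{W}}$ for a fixed constant $\epsilon$, you therefore need $L_0\gtrsim \epsilon^{-2}\log^3 n$, not $\log^2 n$: with $L_0\asymp\log^2 n$ the exponent $L_0t^2/(8\Delta^2)$ is $O(1)$ and the tail bound is vacuous (equivalently, the smallest deviation it certifies is of order $\sqrt{\log n}\,\fronorm{\mtx{W}}$, so your per-step ``contraction'' factor exceeds $1$ and the iteration does not contract). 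Repairing this forces $L=KL_0\gtrsim\log^4 n$, which overshoots the lemma's budget of $c\log^3 n$.

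The underlying issue is that the inner golfing is unnecessary: the lemma only asks for a fixed relative accuracy $\sqrt{2}/20$, not geometric decay, so a single shot of your $\mathcal{F}$ using all $L\ge c\log^3 n$ masks already suffices. That is the paper's route --- decompose $\mtx{X}=\lambda_1\vct{u}_1\vct{u}_1^*+\lambda_2\vct{u}_2\vct{u}_2^*$, apply the truncated operator to each rank-one piece (Lemma \ref{initialization}) and debias the $\twonorm{\vct{u}_j}^2\mtx{I}$ terms via Lemma \ref{identityinrange}. The geometric decay you are trying to manufacture here is instead supplied by the outer golfing scheme, which invokes this lemma $B\asymp\log n$ times on independent batches of $c\log^3 n$ masks each; that is exactly where the total count $\log^4 n$ in Theorem \ref{mainthm} comes from. (A separate small slip: in deriving $\fronorm{\mtx{Y}_T-\mtx{X}}\le\frac{1}{5}\fronorm{\mtx{X}}$ you used $\|\mathcal{P}_T(\mtx{M})\|\le\|\mtx{M}\|$, but $\mathcal{P}_T$ is not an operator-norm contraction; one only has $\|\mathcal{P}_T(\mtx{M})\|\le 2\|\mtx{M}\|$. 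The correct chain gives $\sqrt{2}\cdot 2\cdot\frac{\sqrt{2}}{20}=\frac{1}{5}$, so the stated consequence survives, but your intermediate bound of $\frac{1}{10}$ does not.)
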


To build our approximate dual certificate $\mtx{Z}$, we partition the
modulations or CDPs into $B+1$ different groups so that, from now on,
$\mathcal{A}_0$ corresponds to those measurements from the first $L_0$
modulations, $\mathcal{A}_1$ to those from the next $L_1$ ones, and so
on. Clearly, $L_0+L_1+\ldots+L_B=L$.  The random mappings
$\{\mathcal{A}_b\}_{b=0}^B$ correspond to independent modulations and
are thus independent. Our golfing scheme starts with
$\mtx{X}^{(0)}=\frac{2}{nL_0}\mathcal{P}_T(\cA_0^*(\vct{1}))$
($\vct{1}$ is the all-one vector) and for $b = 1, \ldots, B$,
inductively defines
\begin{itemize}
\item $\mtx{Y}^{(b)}\in\text{Range}(\mathcal{A}_b^*)$ obeying $\|\mtx{Y}^{(b)}-\mtx{X}^{(b-1)}\|\le\frac{\sqrt{2}}{20}\fronorm{\mtx{X}^{(b-1)}}$, 
\item and
  $\mtx{X}^{(b)}=\mtx{X}^{(b-1)}-\mathcal{P}_T(\mtx{Y}^{(b)})$.
\end{itemize}
In the end, we set 
\[
\mtx{Z}=\mtx{Y}-\frac{2}{nL_0}\cA_0^*(\vct{1}), \qquad
\mtx{Y}=\sum_{t=1}^B\mtx{Y}^{(b)}.
\]
Note that Lemma \ref{fixedmatlem} asserts that $\mtx{Y}^{(b)}$ exists
with high probability, and that for each $b$ both
\begin{align}
\label{pariter}
\fronorm{\mtx{X}^{(b)}}\le \frac{1}{5}\fronorm{\mtx{X}^{(b-1)}} \quad
\text{and} \quad \|\mtx{Y}_{T^\perp}^{(b)}\|\le
\frac{\sqrt{2}}{20}\fronorm{\mtx{X}^{(b-1)}}
\end{align}
hold on an event of probability at least $1-1/n^2$.

We now show that our construction $\mtx{Z}$ satisfies the required
assumptions from Lemma \ref{lem:ineact}. First, $\mtx{Z}$ is
self-adjoint and of the form $\mathcal{A}^*(\vct{\lambda})$ with
$\vct{\lambda}\in\R^{nL}$.
Second,
\begin{align*}
  \mtx{Z}_T=\mtx{Y}_T-\frac{2}{nL_0}\mathcal{P}_T(\cA_0^*(\vct{1})) 
  = \sum_{b=1}^B
  \mathcal{P}_T(\mtx{Y}^{(b)})-\mtx{X}^{(0)}=\sum_{b=1}^B
  (\mtx{X}^{(b-1)}-\mtx{X}^{(b)})-\mtx{X}^{(0)}=-\mtx{X}^{(B)}.
\end{align*}
Then \eqref{pariter} implies that with probability at least $1-B/n^2$
\begin{align}
\label{parerr}
\fronorm{\mtx{Z}_T}\le \fronorm{\mtx{X}^{(B)}} \le
\frac{1}{5^B}\fronorm{\mtx{X}^{(0)}}.
\end{align}
Also, \eqref{pariter} gives
\begin{align}
\label{perperr}
\|\mtx{Y}_{T^\perp}\|\leq \sum_{b=1}^B\|\mtx{Y}^{(b)}_{T^\perp}\|\leq \frac{\sqrt{2}}{20}\sum_{b=1}^B\|\mtx{X}^{(t-1)}\|_F\le\frac{\sqrt{2}}{20}\sum_{b=1}^B\frac{1}{5^b}\|\mtx{X}^{(0)}\|_F<\frac{\sqrt{2}}{16}\|\mtx{X}^{(0)}\|_F
\end{align}
with probability at least $1-B/n^2$.  If $L_0\ge c\log n$ for a
sufficiently large constant $c>0$, Lemma \ref{identityinrange} states
that
\begin{align*}
\bigg\|\frac{2}{nL_0}\cA_0^*(\vct{1})-2\mtx{I}\bigg\|\le \frac{1}{4}
\end{align*}
with probability at least $1-1/n^2$.  Using the fact that for any  matrix
$\W$, we have $\|\W_T\| \le 2 \|\W\|$ and $\|\W_{T^\perp}\| \le
\|\W\|$ we conclude that 
\begin{align}
  \|\mtx{X}^{(0)}-2\mtx{I}_T\|\le 1/2, \qquad
  \|\mtx{Y}_{T^\perp}-\mtx{Z}_{T^\perp}-2\mtx{I}_{T^\perp}\|\le
  {1}/{4}.
  \label{interineq1}
\end{align}
Since $\mtx{X}^{(0)}$ has rank at most 2,
\begin{align*}
\fronorm{\mtx{X}^{(0)}}\le\sqrt{2}\big\|\mtx{X}^{(0)}\big\|\le\sqrt{2}\|\mtx{X}^{(0)}-2\mtx{I}_T\|+2\sqrt{2}\big\|\mtx{I}_T\big\|
\end{align*}
Finally, with \eqref{interineq1} and $\|\mtx{I}_T\big\| \le 1$, we
conclude that 
\begin{align}
\label{boundfroX0}
\fronorm{\mtx{X}^{(0)}} < 4.
\end{align}
Plugging this into \eqref{parerr} we arrive at
\begin{align}
\label{parerrZ}
\fronorm{\mtx{Z}_T}\le \frac{4}{5^B}.
\end{align}
Also, \eqref{perperr}, \eqref{interineq1} and \eqref{boundfroX0} give
\begin{align}
\label{perperrZ}
\|\mtx{Z}_{T^\perp}+2\mtx{I}_{T^\perp}\|\le
\|\mtx{Y}_{T^\perp}\|+\|\mtx{Y}_{T^\perp}-\mtx{Z}_{T^\perp}-2\mtx{I}_{T^\perp}\|\le
\frac{\sqrt{2}}{4}+\frac{1}{4}<1\quad\Rightarrow\quad\mtx{Z}_{T^\perp}\preceq
-\mtx{I}_{T^\perp}.
\end{align}
Therefore, the assumptions in Lemma \ref{lem:ineact} hold with
probability at least $1-1/2n$ by applying the union bound and using
with the proviso that $B\ge c_1\log n$ and $L_b\ge c_2\log^3 n$ for
sufficiently large constants $c_1$ and $c_2$ (this is why we require
$L\ge c\log^4 n$ for a sufficiently large constant).

\subsection{Proof of Lemma \ref{fixedmatlem}}
\label{fixedmat}

The immediate consequences hold for the following reasons. First,
since any matrix in $T$ has rank at most 2,
\[
\|\mtx{Y}_T-\mtx{X}\|_F\leq \sqrt{2}\|\mtx{Y}_T-\mtx{X}\| \leq
2\sqrt{2}\|\mtx{Y}-\mtx{X}\|\leq \frac{1}{5}\|\mtx{X}\|_F, 
\]
where the second inequality follows from $\|\mtx{M}_T\| \leq 2 \|\mtx{M}\|$
for any $\mtx{M}$.  Second, since $\|\mtx{M}_{T^\perp}\|\leq
\|\mtx{M}\|$,
\[
\|\mtx{Y}_{T^\perp}\|=\|\mtx{Y}_{T^\perp}-\mtx{X}_{T^\perp}\|\leq
\|\mtx{Y}-\mtx{X}\|\leq \frac{\sqrt{2}}{20}\|\mtx{X}\|_F.
\]
It thus suffices to prove the first property. To this end consider the
eigenvalue decomposition of
$\mtx{X}=\lambda_1\vct{u}_1\vct{u}_1^*+\lambda_2\vct{u}_2\vct{u}_2^*$. The
proof follows from Lemma \ref{initialization} below combined with
Lemma \ref{identityinrange}.

\begin{lemma}
\label{initialization}
Assume $L\ge c \log^3 n$ for a sufficiently large constant $c$.  Given
any fixed self-adjoint matrix $\vct{v}\vct{v}^*$, with probability at
least $1-1/(2n^3)$ there exists $\widetilde{\mtx{Y}} \in
\text{Range}(\mathcal{A}^*)$ obeying
\[
\|\widetilde{\mtx{Y}}-(\vct{v}\vct{v}^*+\twonorm{\vct{v}}^2\mtx{I})\|\leq
\epsilon \twonorm{\vct{v}}^2.
\]
\end{lemma}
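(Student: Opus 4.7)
\textbf{Proof proposal for Lemma \ref{initialization}.}

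The plan is to take the obvious candidate
\[
\widetilde{\mtx{Y}} \;=\; \frac{1}{nL}\,\mathcal{A}^*\mathcal{A}(\vct{v}\vct{v}^*) \;=\; \frac{1}{L}\sum_{\ell=1}^L \mtx{M}_\ell, \qquad \mtx{M}_\ell \;:=\; \frac{1}{n}\sum_{k=1}^n |\vct{f}_k^*\mtx{D}_\ell^*\vct{v}|^2\,\mtx{D}_\ell\vct{f}_k\vct{f}_k^*\mtx{D}_\ell^*,
\]
which by construction lies in $\text{Range}(\mathcal{A}^*)$. By homogeneity I may assume $\twonorm{\vct{v}}=1$, and by Lemma \ref{teo:expectation1}, $\E\,\mtx{M}_\ell = \vct{v}\vct{v}^* + \mtx{I}$. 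Since the $\mtx{M}_\ell$ are i.i.d., the problem reduces to a matrix concentration bound for the empirical average of $L$ independent self-adjoint matrices. I intend to apply matrix Hoeffding (Lemma \ref{mathoef}), which needs a deterministic bound on $\opnorm{\mtx{M}_\ell}$. This is the main obstacle: the crude bound gives only $\opnorm{\mtx{M}_\ell}\le nM^4$, which is too weak, so I will mimic the truncation strategy of the proof of Lemma \ref{lem:PL}.

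Introduce the good event
\[
E_\ell \;=\; \bigcap_{k=1}^n \{\,|\vct{f}_k^*\mtx{D}_\ell^*\vct{v}|\le T_n\,\}, \qquad T_n \;=\; C\sqrt{\beta \log n},
\]
and set $\widetilde{\mtx{M}}_\ell := \mtx{M}_\ell\,\mathbb{1}_{E_\ell}$. Writing $\mtx{M}_\ell = \mtx{D}_\ell\bigl(\tfrac{1}{n}\sum_k |\vct{f}_k^*\mtx{D}_\ell^*\vct{v}|^2\vct{f}_k\vct{f}_k^*\bigr)\mtx{D}_\ell^*$ and using $\tfrac{1}{n}\sum_k\vct{f}_k\vct{f}_k^* = \mtx{I}$ together with $\opnorm{\mtx{D}_\ell}\le M$, one gets $\opnorm{\widetilde{\mtx{M}}_\ell}\le M^2 T_n^2 =: \tfrac{1}{2}\Delta$, hence $\opnorm{\widetilde{\mtx{M}}_\ell - \E\widetilde{\mtx{M}}_\ell}\le \Delta = O(M^2\log n)$. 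Because $\vct{f}_k^*\mtx{D}_\ell^*\vct{v} = \sum_t \bar{f}_k[t]\bar{d}_\ell[t]v[t]$ is a sum of independent zero-mean bounded complex summands with $\sum_t |\bar{f}_k[t]v[t]|^2 = 1$, a scalar Hoeffding bound (real and imaginary parts, with a union bound over $k$) gives $\P(E_\ell^c) \le 4n^{1-\beta}$.

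With these ingredients the rest is routine assembly. Matrix Hoeffding applied to $\widetilde{\mtx{M}}_\ell - \E\widetilde{\mtx{M}}_\ell$ yields
\[
\P\Bigl(\bigl\|\tfrac{1}{L}\sum_{\ell=1}^L\widetilde{\mtx{M}}_\ell - \E\widetilde{\mtx{M}}_\ell\bigr\| \ge \epsilon/2\Bigr) \;\le\; 2n \exp\!\Bigl(-\frac{L\,\epsilon^2}{32\,\Delta^2}\Bigr),
\]
which is $\le 1/(4n^3)$ whenever $L \ge c\log^3 n$ for a suitable constant (depending on $\epsilon$ and $M$). The truncation bias is controlled by Jensen together with the crude uniform estimate $\opnorm{\mtx{M}_\ell}\le nM^4$ (from $|\vct{f}_k^*\mtx{D}^*\vct{v}|\le \sqrt{n}M$):
\[
\bigl\|\E\mtx{M}_\ell - \E\widetilde{\mtx{M}}_\ell\bigr\| \;\le\; \E\bigl[\opnorm{\mtx{M}_\ell}\mathbb{1}_{E_\ell^c}\bigr] \;\le\; nM^4\cdot\P(E_\ell^c) \;\le\; 4M^4 n^{2-\beta},
\]
which is $\le \epsilon/2$ for $\beta$ chosen as a sufficiently large absolute constant.

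Finally, on $E := \bigcap_\ell E_\ell$ (of probability $\ge 1 - 4Ln^{1-\beta}$), one has $\widetilde{\mtx{Y}} = \tfrac{1}{L}\sum_\ell \widetilde{\mtx{M}}_\ell$, so intersecting with the matrix-Hoeffding event and invoking the triangle inequality gives
\[
\bigl\|\widetilde{\mtx{Y}} - (\vct{v}\vct{v}^* + \twonorm{\vct{v}}^2\mtx{I})\bigr\| \;\le\; \bigl\|\tfrac{1}{L}\sum_\ell\widetilde{\mtx{M}}_\ell - \E\widetilde{\mtx{M}}\bigr\| + \bigl\|\E\widetilde{\mtx{M}} - \E\mtx{M}\bigr\| \;\le\; \epsilon.
\]
Taking $\beta$ a large absolute constant so that $4Ln^{1-\beta}\le 1/(4n^3)$ (possible since $L=\mathrm{poly}(\log n)$), the total failure probability is at most $1/(2n^3)$, as claimed.
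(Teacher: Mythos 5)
Your proof is correct and follows essentially the same route as the paper's: truncate at $T_n\sim\sqrt{\beta\log n}$, compute the mean via Lemma \ref{teo:expectation1}, control the truncation bias by Jensen plus scalar Hoeffding, and conclude with matrix Hoeffding under the $L\gtrsim \log^3 n$ sample requirement. The only immaterial difference is that the paper takes the per-term-truncated sum itself as the certificate $\widetilde{\mtx{Y}}$, whereas you keep the untruncated $\frac{1}{nL}\mathcal{A}^*\mathcal{A}(\vct{v}\vct{v}^*)$ and note that it coincides with the truncated version on a high-probability event.
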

\begin{proof}
  Without loss of generality, assume $\twonorm{\vct{v}}=1$ and set
  $\widetilde{\mtx{Y}} = \avg{\mtx{Y}}$, 
\[
\avg{\mtx{Y}} = \frac{1}{L}\sum_{l=1}^L \mtx{Y}_\ell, \qquad
\mtx{Y}_\ell=\frac{1}{n}\sum_{k=1}^n
\left|\vct{f}_k^*\mtx{D}_\ell^*\vct{v}\right|^2\mathbb{1}_{\{\left|\vct{f}_k^*\mtx{D}_\ell^*\vct{v}\right|\leq
  T_n\}}\mtx{D}_\ell\vct{f}_k\vct{f}_k^*\mtx{D}_\ell^*,
\]
which is of the form $\cA^*(\vct{\lambda})$.  The $\mtx{Y}_\ell$'s are
i.i.d.~copies of $\mtx{Y}$,
\begin{align*}
  \mtx{Y} & =\frac{1}{n}\sum_{k=1}^n
  \left|\vct{f}_k^*\mtx{D}^*\vct{v}\right|^2\mtx{D}\vct{f}_k\vct{f}_k^*\mtx{D}^*-\frac{1}{n}\sum_{k=1}^n
  \left|\vct{f}_k^*\mtx{D}^*\vct{v}\right|^2\mathbb{1}_{\{\left|\vct{f}_k^*\mtx{D}^*\vct{v}\right|>T_n\}}\mtx{D}\vct{f}_k\vct{f}_k^*\mtx{D}^*.
\end{align*}
Notice that the random positive semi-definite matrix $\mtx{Y}$ obeys
\[
\mtx{Y}\preceq \frac{1}{n}\sum_{k=1}^n
T_n^2\mtx{D}\vct{f}_k\vct{f}_k^*\mtx{D}^*= T_n^2 \mtx{D}\mtx{D}^*.
\]
By Lemma \ref{teo:expectation1},
\[
\E\left(\frac{1}{n}\sum_{k=1}^n \left|\vct{f}_k^*\mtx{D}^*\vct{v}\right|^2\mtx{D}\vct{f}_k\vct{f}_k^*\mtx{D}^*\right)=\vct{v}\vct{v}^*+\mtx{I}.
\]
Using Jensen's inequality, we have as in the proof of Lemma \ref{lem:PL}
\begin{align}
\label{intermediate}
\left\|\E\left(\frac{1}{n}\sum_{k=1}^n
    \left|\vct{f}_k^*\mtx{D}^*\vct{v}\right|^2\mathbb{1}_{\{\left|\vct{f}_k^*\mtx{D}^*\vct{v}\right|>T_n\}}\mtx{D}\vct{f}_k\vct{f}_k^*\mtx{D}^*\right)\right\|
&\leq \frac{1}{n}\sum_{k=1}^n
\E(\mathbb{1}_{\{\left|\vct{f}_k^*\mtx{D}^*\vct{v}\right|>T_n\}})n^2M^4.
\end{align}
Put $T_n=\sqrt{2\beta\log n}$. Hoeffding's inequality gives 
\[
\E(\mathbb{1}_{\{\left|\vct{f}_k^*\mtx{D}^*\vct{v}\right|>T_n\}})\leq
2n^{-\beta}.
\]
Plugging this into \eqref{intermediate} we arrive at
\[
\left\|\E\left(\frac{1}{n}\sum_{k=1}^n \left|\vct{f}_k^*\mtx{D}^*\vct{v}\right|^2\mathbb{1}_{\{\left|\vct{f}_k^*\mtx{D}^*\vct{v}\right|>T_n\}}\mtx{D}\vct{f}_k\vct{f}_k^*\mtx{D}^*\right)\right\|\leq \frac{2M^4}{n^{\beta-2}}.
\]
For sufficiently large $\beta$, this implies
\begin{align}
\label{meandev}
\left\|\E(\mtx{Y})-(\vct{v}\vct{v}^*+\mtx{I})\right\|\leq \frac{2\|\mtx{D}\|^4}{n^{\beta-2}}\le\frac{\epsilon}{2}.
\end{align}
By using Hoeffding inequality in a similar fashion as in the proof of
Lemma \ref{lem:PL}, we obtain (we omit the details)
\[
\left\|\avg{\mtx{Y}}-\E(\mtx{Y})\right\|\leq \frac{\epsilon}{2}.
\] 
Combining the latter with \eqref{meandev}, we conclude
\[
\left\|\avg{\mtx{Y}}-(\vct{v}\vct{v}^*+\mtx{I})\right\|\leq \epsilon.
\]
\end{proof}

\section{Discussion}

In this paper, we proved that a signal could be recovered by convex
programming techniques from a few diffraction patterns corresponding
to generic modulations obeying an admissibility condition. We expect
that our results, methods and proofs extend to more general random
modulations although we have not pursued such extensions in this
paper. Further, we proved that on the order of $(\log n)^4$ CDPs
suffice for perfect recovery and we expect that further refinements
would allow to reduce this number, perhaps all the way down to a
figure independent of the number $n$ of unknowns. Such refinements
appear quite involved to us and since our intention is to provide a
reasonably short and conceptually simple argument, we leave such
refinements to future research.

\section{Appendix}

Set $\omega=e^{\frac{2\pi i}{n}}$ to be the $n$th root of unity so that 
\begin{align*}
  \vct{f}_k^*=\begin{bmatrix}\omega^{-0(k-1)}, \omega^{-1(k-1)}, \ldots,\omega^{-(n-1)(k-1)}\end{bmatrix},\quad\vct{f}_k=\begin{bmatrix}\omega^{0(k-1)}\\\omega^{1(k-1)}\\\vdots\\\omega^{(n-1)(k-1)}\end{bmatrix}. 
\end{align*}
For two integers $a$ and $b$ we use $a\overset{n}{\equiv}b$ to denote
congruence of $a$ and $b$ modulo $n$ ($n$ divides $a-b$).

\subsection{Proof of Lemma \ref{teo:expectation1}}
Put 
\begin{align*}
\mtx{Y}:=\frac{1}{n}\sum_{k=1}^n\abs{\vct{f}_k^*\mtx{D}^*\vct{x}}^2\mtx{D}\vct{f}_k\vct{f}_k^*\mtx{D}^*.
\end{align*}
By definition,
\begin{eqnarray*}
\abs{\vct{f}_k^*\mtx{D}^*\vct{x}}^2&=&\big(\sum_{a=1}^{n}\bar{d}_{a}x_{a}\omega^{-(a-1)(k-1)}\big)\big(\sum_{b=1}^{n}d_{b}\bar{x}_{b}\omega^{(b-1)(k-1)}\big)\\
&=&\sum_{a=1}^n\sum_{b=1}^n\omega^{(b-a)(k-1)}\bar{d}_ad_bx_a\bar{x}_b
\end{eqnarray*}
Further,
\begin{eqnarray*}
\mtx{Y}_{pq}&=&\frac{1}{n}\sum_{k=1}^n\sum_{a=1}^n\sum_{b=1}^n\omega^{(b-a+p-q)(k-1)}\bar{d}_ad_bd_p\bar{d}_qx_a\bar{x}_b\\
&=&\sum_{a=1}^n\sum_{b=1}^n\bar{d}_ad_bd_p\bar{d}_qx_a\bar{x}_b\Bigl(\frac{1}{n}\sum_{k=1}^n\omega^{(b-a+p-q)(k-1)}\Bigr)\\
&=&\sum_{a=1}^n\sum_{b=1}^n\bar{d}_ad_bd_p\bar{d}_qx_a\bar{x}_b\mathbb{1}_{\{a+q\, \overset{n}{\equiv}\, b+p\}}.
\end{eqnarray*}
Therefore,
\begin{equation*}
\mathbb{E}[\mtx{Y}_{pq}]=\sum_{a=1}^n\sum_{b=1}^n\E[\bar{d}_ad_bd_p\bar{d}_q]x_a\bar{x}_b\mathbb{1}_{\{a+q\, \overset{n}{\equiv}\, b+p\}}.
\end{equation*}
\begin{itemize}
\item \underline{Diagonal terms ($p=q$):} Here, $\E[\bar{d}_ad_b
  |d_p|^2] = 0$ unless $a = b$. This gives
\begin{align*}
\E[\mtx{Y}_{pp}]&=\sum_{a=1}^n\E[\abs{d_a}^2\abs{d_p}^2]\abs{x_a}^2\\
&=\E[\abs{d_p}^4]\abs{x_p}^2+\E[\abs{d_p}^2(\sum_{a\neq p}^n\abs{d_a}^2\abs{x_a}^2)]\\
&=\abs{x_p}^2+\twonorm{\vct{x}}^2.
\end{align*}
\item \underline{Off-diagonal terms ($p\neq q$):} Here
  $\E[\bar{d}_ad_bd_p\bar{d}_q] = 0$ unless $(a=p , b=q)$ so that
\[
\mathbb{E}[\mtx{Y}_{pq}] = {(\mathbb{E}[\abs{d}^2])}^2x_p\bar{x}_q = 
x_p\bar{x}_q.
\]
\end{itemize}
This concludes the proof.

\subsection{Proof of Lemma \ref{teo:expectation2}}
Put
\begin{align*}
\mtx{R}=\frac{1}{n}\sum_{k=1}^n{\big(\vct{f}_k^*\mtx{D}^*\vct{x}\big)}^2\mtx{D}\vct{f}_k\vct{f}_k^T\mtx{D}.
\end{align*}
By definition, 
\begin{equation*}
{(\vct{f}_k^*\mtx{D}^*\vct{x})}^2 
= \sum_{a=1}^n\sum_{b=1}^n\omega^{-(a+b-2)(k-1)}\bar{d}_a\bar{d}_bx_ax_b
\end{equation*}
and
\begin{eqnarray*}
\mtx{R}_{pq}&=&\frac{1}{n}\sum_{k=1}^n\sum_{a=1}^n\sum_{b=1}^n\omega^{(p+q-a-b)(k-1)}\bar{d}_a\bar{d}_bd_pd_qx_ax_b\\
&=&\sum_{a=1}^n\sum_{b=1}^n\bar{d}_a\bar{d}_bd_pd_qx_ax_b\Bigl(\frac{1}{n}\sum_{k=1}^n\omega^{(p+q-a-b)(k-1)}\Bigr)\\
&=&\sum_{a=1}^n\sum_{b=1}^n\bar{d}_a\bar{d}_bd_pd_qx_ax_b\mathbb{1}_{\{p+q\,\overset{n}{\equiv}\,a+b\}}.
\end{eqnarray*}
Therefore,
\begin{equation*}
  \mathbb{E}[\mtx{R}_{pq}]=\sum_{a=1}^n\sum_{b=1}^n\E[\bar{d}_a\bar{d}_bd_pd_q]x_a{x}_b\mathbb{1}_{\{p+q\, \overset{n}{\equiv}\, a+b\}}.
\end{equation*}
\begin{itemize}
\item \underline{Diagonal terms ($p=q$):} Here, $\E[\bar{d}_ad_b
  |d_p|^2] = 0$ unless $a = b = p$. This gives 
\[
\mathbb{E}[\mtx{R}_{pp}] =\mathbb{E}[\abs{d}^4]x_p^2 = 2 x_p^2.
\]
\item \underline{Off-diagonal terms ($p\neq q$):} Here, $\E[\bar{d}_a
  \bar{d_b} d_p d_q] = 0$ unless $(a=p , b=q)$ or $(a=q ,
  b=p)$. This gives 
\[
\mathbb{E}[\mtx{R}_{pq}] = 2\mathbb{E}[\abs{d_p}^2\abs{d_q}^2]x_px_q =
2 x_p x_q.
\]
\end{itemize}
This concludes the proof.

\subsection{Proof of Lemma \ref{identityinrange}}

Note that 
\begin{align*}
  \mtx{Z}:=\frac{1}{nL}\mathcal{A}^*(\vct{1})=\frac{1}{nL}\sum_{\ell=1}^L\sum_{k=1}^n
  \mtx{D}_\ell\vct{f}_k\vct{f}_k^*\mtx{D}_\ell^*=\frac{1}{L}\sum_{\ell=1}^L
  \mtx{D}_\ell \mtx{D}_\ell^*.
\end{align*}
Therefore, $\mtx{Z}$ is a diagonal matrix with i.i.d.~diagonal entries
distributed as $\frac{1}{L}\sum_{\ell=1}^L X_{\ell}$ , where the
$X_\ell$ are i.i.d.~random variables with $\E[X_\ell]=\E[|d|^2]=1$ and
$|X_\ell|=|d|^2\le M^2$. The statement in the lemma then follows from
Hoeffding's inequality
\begin{align*}
\mathbb{P}\bigg\{\abs{\frac{1}{L}\sum_{\ell=1}^L X_{\ell}-1}\ge t \bigg\}\le 2 e^{-\frac{2L}{M^2}t^2}
\end{align*}
combined with the union bound.

\subsection{ Proof of Lemma \ref{upperRIP1}}
The proof is straightforward and parallels calculations in
\cite{candes2012phaselift}. Fix a unit-normed vector $\vct{v}$, then
\begin{align*}
  \onenorm{\mathcal{A}(\vct{v}\vct{v}^*)}=\sum_{\ell=1}^L\sum_{k=1}^n\abs{\vct{f}_k^*\mtx{D}_\ell^*\vct{v}}^2
  = n \sum_{\ell=1}^L\twonorm{\mtx{D}_\ell^*\vct{v}}^2\le
  nM^2\sum_{\ell=1}^L\twonorm{\vct{v}}^2=nLM^2.
\end{align*}
Consider now the eigenvalue decomposition
$\mtx{X}=\sum_{j=1}^n\lambda_j\vct{v}_j\vct{v}_j^*$ where $\lambda_j$
is nonnegative since $\mtx{X} \succeq \mtx{0}$. Then
\begin{align*}
  \onenorm{\mathcal{A}(\mtx{X})}=
  \sum_{j=1}^n\lambda_j\onenorm{\cA(\vct{v}_j\vct{v}_j^*)}\le nLM^2
  \sum_j {\lambda_j} = nLM^2 \tr(\mtx{X}).
\end{align*}

{\small 
\subsection*{Acknowledgements}
E~C. is partially supported by AFOSR under grant FA9550-09-1-0643, by
ONR under grant N00014-09-1-0258 and by a gift from the Broadcom
Foundation.  M.~S.~is supported by a a Benchmark Stanford Graduate
Fellowship. X.~L.~is supported by the Wharton Dean's Fund for
Post-Doctoral Research and by funding from the National Institutes of
Health.  We would like to thank V.~Voroninski for helpful discussions,
especially for bringing to our attention the difficulty of
establishing RIP-1 results in the masked Fourier model. M.~S.~thanks
David Brady and Adam Backer for fruitful discussions about the
implementation of structured illuminations in X-ray crystallography
and microscopy applications.

\bibliography{PhaseLift}

\begin{thebibliography}{10}

\bibitem{MSweb}
{\em \url{www.stanford.edu/~mahdisol/code}}.

\bibitem{blog1}
\href{http://dustingmixon.wordpress.com/2013/03/19/\\saving-phase-injectivity-and-stability-for-phase-retrieval/}{D.~G.
  Mixon blog: Saving phase: Injectivity and stability for phase retrieval}.

\bibitem{blog2}
\href{http://dustingmixon.wordpress.com/2013/08/04/aim-workshop-frame-theory-intersects-geometry/}{D.~G.
  Mixon blog: AIM workshop: frame theory intersects geometry}.

\bibitem{ahmed2012blind}
A.~Ahmed, B.~Recht, and J.~Romberg.
\newblock Blind deconvolution using convex programming.
\newblock {\em arXiv preprint arXiv:1211.5608}, 2012.

\bibitem{alexeev2012phase}
B.~Alexeev, A.~S. Bandeira, M.~Fickus, and D.~G. Mixon.
\newblock Phase retrieval with polarization.
\newblock {\em arXiv preprint arXiv:1210.7752}, 2012.

\bibitem{auslender2006interior}
A.~Auslender and M.~Teboulle.
\newblock Interior gradient and proximal methods for convex and conic
  optimization.
\newblock {\em SIAM Journal on Optimization}, 16(3):697--725, 2006.

\bibitem{balan2009nonlinear}
R.~Balan.
\newblock A nonlinear reconstruction algorithm from absolute value of frame
  coefficients for low redundancy frames.
\newblock In {\em International Conference on Sampling Theory and Applications
  (SAMPTA)}, 2009.

\bibitem{balan2010signal}
R.~Balan.
\newblock On signal reconstruction from its spectrogram.
\newblock In {\em 44th Annual Conference on Information Sciences and Systems
  (CISS)}, pages 1--4, 2010.

\bibitem{balan2012reconstruction}
R.~Balan.
\newblock Reconstruction of signals from magnitudes of redundant
  representations.
\newblock {\em arXiv preprint arXiv:1207.1134}, 2012.

\bibitem{balan2013stability}
R.~Balan.
\newblock Stability of phase retrievable frames.
\newblock {\em arXiv preprint arXiv:1308.5465}, 2013.

\bibitem{balan2009painless}
R.~Balan, B.~G. Bodmann, P.~G. Casazza, and D.~Edidin.
\newblock Painless reconstruction from magnitudes of frame coefficients.
\newblock {\em Journal of Fourier Analysis and Applications}, 15(4):488--501,
  2009.

\bibitem{balan2006signal}
R.~Balan, P.~Casazza, and D.~Edidin.
\newblock On signal reconstruction without phase.
\newblock {\em Applied and Computational Harmonic Analysis}, 20(3):345--356,
  2006.

\bibitem{balan2007equivalence}
R.~Balan, P.~Casazza, and D.~Edidin.
\newblock Equivalence of reconstruction from the absolute value of the frame
  coefficients to a sparse representation problem.
\newblock {\em IEEE Signal Processing Letters}, 14(5):341--343, 2007.

\bibitem{balan2013invertibility}
R.~Balan and Y.~Wang.
\newblock Invertibility and robustness of phaseless reconstruction.
\newblock {\em arXiv preprint arXiv:1308.4718}, 2013.

\bibitem{bandeira2013saving}
A.~S. Bandeira, J.~Cahill, D.~G. Mixon, and A.~A. Nelson.
\newblock Saving phase: Injectivity and stability for phase retrieval.
\newblock {\em arXiv preprint arXiv:1302.4618}, 2013.

\bibitem{bandeira2013phase}
A.~S. Bandeira, Y.~Chen, and D.~G. Mixon.
\newblock Phase retrieval from power spectra of masked signals.
\newblock {\em arXiv preprint arXiv:1303.4458}, 2013.

\bibitem{becker2011templates}
S.~R. Becker, E.~J. Candes, and M.~C. Grant.
\newblock Templates for convex cone problems with applications to sparse signal
  recovery.
\newblock {\em Mathematical Programming Computation}, 3(3):165--218, 2011.

\bibitem{bunk2007diffractive}
O.~Bunk, A.~Diaz, F.~Pfeiffer, C.~David, B.~Schmitt, D.~K. Satapathy, and J.~F
  Veen.
\newblock Diffractive imaging for periodic samples: retrieving one-dimensional
  concentration profiles across microfluidic channels.
\newblock {\em Acta Crystallographica Section A: Foundations of
  Crystallography}, 63(4):306--314, 2007.

\bibitem{cahillphase}
J.~Cahill, P.~G. Casazza, J.~Peterson, and L.~Woodland.
\newblock {\em arXiv preprint arXiv:1305.6226}.

\bibitem{candes2013phase}
E.~J. Candes, Y.~C Eldar, T.~Strohmer, and V.~Voroninski.
\newblock Phase retrieval via matrix completion.
\newblock {\em SIAM Journal on Imaging Sciences}, 6(1):199--225, 2013.

\bibitem{candes2012solving}
E.~J. Candes and X.~Li.
\newblock Solving quadratic equations via phaselift when there are about as
  many equations as unknowns.
\newblock {\em Foundations of Computational Mathematics}, pages 1--10, 2012.

\bibitem{candes2011robust}
E.~J. Candes, X.~Li, Y.~Ma, and J.~Wright.
\newblock Robust principal component analysis?
\newblock {\em Journal of the ACM (JACM)}, 58(3):11, 2011.

\bibitem{candes2012phaselift}
E.~J. Candes, T.~Strohmer, and V.~Voroninski.
\newblock Phaselift: Exact and stable signal recovery from magnitude
  measurements via convex programming.
\newblock {\em Communications on Pure and Applied Mathematics}, 2012.

\bibitem{chai2011array}
A.~Chai, M.~Moscoso, and G.~Papanicolaou.
\newblock Array imaging using intensity-only measurements.
\newblock {\em Inverse Problems}, 27(1), 2011.

\bibitem{corbett2006pauli}
J.~V. Corbett.
\newblock The pauli problem, state reconstruction and quantum-real numbers.
\newblock {\em Reports on Mathematical Physics}, 57(1):53--68, 2006.

\bibitem{fienup1987phase}
J.~C. Dainty and J.~R. Fienup.
\newblock Phase retrieval and image reconstruction for astronomy.
\newblock {\em Image Recovery: Theory and Application, ed. byH. Stark, Academic
  Press, San Diego}, pages 231--275, 1987.

\bibitem{demanet2012stable}
L.~Demanet and P.~Hand.
\newblock Stable optimizationless recovery from phaseless linear measurements.
\newblock {\em arXiv preprint arXiv:1208.1803}, 2012.

\bibitem{demanet2013convex}
L.~Demanet and V.~Jugnon.
\newblock Convex recovery from interferometric measurements.
\newblock {\em arXiv preprint arXiv:1307.6864}, 2013.

\bibitem{eldar2012phase}
Y.~C. Eldar and S.~Mendelson.
\newblock Phase retrieval: Stability and recovery guarantees.
\newblock {\em arXiv preprint arXiv:1211.0872}, 2012.

\bibitem{goemans1995improved}
M.~X. Goemans and D.~P. Williamson.
\newblock Improved approximation algorithms for maximum cut and satisfiability
  problems using semidefinite programming.
\newblock {\em Journal of the ACM (JACM)}, 42(6):1115--1145, 1995.

\bibitem{gross2011recovering}
D.~Gross.
\newblock Recovering low-rank matrices from few coefficients in any basis.
\newblock {\em IEEE Transactions on Information Theory}, 57(3):1548--1566,
  2011.

\bibitem{gross2013partial}
D.~Gross, F.~Krahmer, and R.~Kueng.
\newblock A partial derandomization of phaselift using spherical designs.
\newblock {\em arXiv preprint arXiv:1310.2267}, 2013.

\bibitem{harrison1993phase}
R.~W. Harrison.
\newblock Phase problem in crystallography.
\newblock {\em JOSA A}, 10(5):1046--1055, 1993.

\bibitem{heinosaari2013quantum}
T.~Heinosaari, L.~Mazzarella, and M.~M. Wolf.
\newblock Quantum tomography under prior information.
\newblock {\em Communications in Mathematical Physics}, 318(2):355--374, 2013.

\bibitem{jaganathan2012robust}
K.~Jaganathan, S.~Oymak, and B.~Hassibi.
\newblock On robust phase retrieval for sparse signals.
\newblock In {\em 50th Annual Allerton Conference on Communication, Control,
  and Computing (Allerton)}, pages 794--799, 2012.

\bibitem{li2012sparse}
X.~Li and V.~Voroninski.
\newblock Sparse signal recovery from quadratic measurements via convex
  programming.
\newblock {\em SIAM J. Math. Anal.}, 45(5):3019--3033, 2013.

\bibitem{loewen1997diffraction}
E.~G. Loewen and E.~Popov.
\newblock {\em Diffraction gratings and applications}.
\newblock CRC Press, 1997.

\bibitem{miao2008extending}
J.~Miao, T.~Ishikawa, Q.~Shen, and T.~Earnest.
\newblock Extending x-ray crystallography to allow the imaging of
  noncrystalline materials, cells, and single protein complexes.
\newblock {\em Annu. Rev. Phys. Chem.}, 59:387--410, 2008.

\bibitem{millane1990phase}
R.~P. Millane.
\newblock Phase retrieval in crystallography and optics.
\newblock {\em JOSA A}, 7(3):394--411, 1990.

\bibitem{mondragon2013determination}
D.~Mondragon and V.~Voroninski.
\newblock Determination of all pure quantum states from a minimal number of
  observables.
\newblock {\em arXiv preprint arXiv:1306.1214}, 2013.

\bibitem{nemirovski2001lectures}
A.~Nemirovski.
\newblock Lectures on modern convex optimization.
\newblock In {\em Society for Industrial and Applied Mathematics (SIAM)}.
  Citeseer, 2001.

\bibitem{netrapalli2013phase}
P.~Netrapalli, P.~Jain, and S.~Sanghavi.
\newblock Phase retrieval using alternating minimization.
\newblock {\em arXiv preprint arXiv:1306.0160}, 2013.

\bibitem{ohlsson2011compressive}
H.~Ohlsson, A.~Y. Yang, R.~Dong, and S.~S. Sastry.
\newblock Compressive phase retrieval from squared output measurements via
  semidefinite programming.
\newblock {\em arXiv preprint arXiv:1111.6323}, 2011.

\bibitem{oymak2012simultaneously}
S.~Oymak, A.~Jalali, M.~Fazel, Y.~C. Eldar, and B.~Hassibi.
\newblock Simultaneously structured models with application to sparse and
  low-rank matrices.
\newblock {\em arXiv preprint arXiv:1212.3753}, 2012.

\bibitem{ranieri2013phase}
J.~Ranieri, A.~Chebira, Y.~M. Lu, and M.~Vetterli.
\newblock Phase retrieval for sparse signals: Uniqueness conditions.
\newblock {\em arXiv preprint arXiv:1308.3058}, 2013.

\bibitem{raz2013vectorial}
O.~Raz, N.~Dudovich, and B.~Nadler.
\newblock Vectorial phase retrieval of 1-d signals.
\newblock {\em IEEE Transactions on Signal Processing}, 61(7):1632--1643, 2013.

\bibitem{reichenbach1965philosophic}
H.~Reichenbach.
\newblock {\em Philosophic foundations of quantum mechanics}.
\newblock University of California Pr, 1965.

\bibitem{rodenburg2008ptychography}
J.~M. Rodenburg.
\newblock Ptychography and related diffractive imaging methods.
\newblock {\em Advances in Imaging and Electron Physics}, 150:87--184, 2008.

\bibitem{shechtman2011sparsity}
Y.~Shechtman, Y.~C. Eldar, A.~Szameit, and M.~Segev.
\newblock Sparsity based sub-wavelength imaging with partially incoherent light
  via quadratic compressed sensing.
\newblock {\em Optics Express}, 19(16):14807--14822, 2011.

\bibitem{thibault2009probe}
P.~Thibault, M.~Dierolf, O.~Bunk, A.~Menzel, and F.~Pfeiffer.
\newblock Probe retrieval in ptychographic coherent diffractive imaging.
\newblock {\em Ultramicroscopy}, 109(4):338--343, 2009.

\bibitem{tropp2012user}
J.~A. Tropp.
\newblock User-friendly tail bounds for sums of random matrices.
\newblock {\em Foundations of Computational Mathematics}, 12(4):389--434, 2012.

\bibitem{luo2010semidefinite}
Z.~Uo, W.~Ma, A.~C. So, Y.~Ye, and S.~Zhang.
\newblock Semidefinite relaxation of quadratic optimization problems.
\newblock {\em IEEE Signal Processing Magazine}, 27(3):20--34, 2010.

\bibitem{waldspurger2012phase}
I.~Waldspurger, A.~d'Aspremont, and S.~Mallat.
\newblock Phase recovery, maxcut and complex semidefinite programming.
\newblock {\em arXiv preprint arXiv:1206.0102}, 2012.

\bibitem{walther1963question}
A.~Walther.
\newblock The question of phase retrieval in optics.
\newblock {\em Journal of Modern Optics}, 10(1):41--49, 1963.

\bibitem{watson1953structure}
F.~H.~C. Watson, J.~D.~and~Crick.
\newblock A structure for deoxyribose nucleic acid.
\newblock {\em Nature}, 171, 1953.

\end{thebibliography}
\bibliographystyle{plain}
}
\end{document}